\documentclass[preprint,12pt,3p]{elsarticle}
\usepackage{microtype}
\usepackage[T1]{fontenc}
\usepackage[utf8]{inputenc}
\usepackage{amsmath, amsthm, amssymb, mathtools}

\usepackage{enumitem, paralist}
  \pltopsep=\medskipamount
  \plitemsep=0pt
  \plparsep=1pt

\usepackage{booktabs}
\usepackage[flushleft]{threeparttable}

\usepackage{subcaption}
\usepackage{pdflscape}
\usepackage{multirow}
\usepackage{url}
\usepackage{xfrac}
\usepackage{siunitx}
\DeclareSIUnit{\hour}{h}
\DeclareSIUnit{\kWh}{kWh}
\sisetup{per-mode=symbol}

\usepackage{thmtools,thm-restate}

\newtheorem{prop}{Proposition}

\usepackage{hyperref}
\usepackage{tikz}
\usetikzlibrary{graphs, quotes, decorations.pathreplacing,positioning, arrows.meta}

\usepackage{pgf, pgfplots}
\usepgfplotslibrary{groupplots}
\pgfplotsset{every boxplot/.style={mark=*,every mark/.append style={mark size=1pt}}}

\makeatletter
\pgfplotsset{
    groupplot xlabel/.initial={},
    every groupplot x label/.style={
        at={($({\pgfplots@group@name\space c1r\pgfplots@group@rows.west}|-{\pgfplots@group@name\space c1r\pgfplots@group@rows.outer south})!0.5!({\pgfplots@group@name\space c\pgfplots@group@columns r\pgfplots@group@rows.east}|-{\pgfplots@group@name\space c\pgfplots@group@columns r\pgfplots@group@rows.outer south})$)},
        anchor=north,
    },
    groupplot ylabel/.initial={},
    every groupplot y label/.style={
            rotate=90,
        at={($({\pgfplots@group@name\space c1r1.north}-|{\pgfplots@group@name\space c1r1.outer
west})!0.5!({\pgfplots@group@name\space c1r\pgfplots@group@rows.south}-|{\pgfplots@group@name\space c1r\pgfplots@group@rows.outer west})$)},
        anchor=south
    },
    execute at end groupplot/.code={%
      \node [/pgfplots/every groupplot x label]
{\pgfkeysvalueof{/pgfplots/groupplot xlabel}};  
      \node [/pgfplots/every groupplot y label] 
{\pgfkeysvalueof{/pgfplots/groupplot ylabel}};  
    }
}

\def\endpgfplots@environment@groupplot{%
    \endpgfplots@environment@opt%
    \pgfkeys{/pgfplots/execute at end groupplot}%
    \endgroup%
}
\makeatother

\usepackage[numbers]{natbib}

\begin{document}
\begin{frontmatter}
\title{The vehicle routing problem with drones and drone speed selection}
\author[1]{Felix Tamke\corref{cor1}}
\ead{felix.tamke@tu-dresden.de}
\author[1]{Udo Buscher}
\ead{udo.buscher@tu-dresden.de}

\address[1]{Faculty of Business and Economics, TU Dresden, Dresden 01062, Germany}

\cortext[cor1]{Corresponding author}

\begin{abstract}
Joint parcel delivery by trucks and drones has enjoyed significant attention for some time, as the advantages of one delivery method offset the disadvantages of the other. This paper focuses on the vehicle routing problem with drones and drone speed selection (VRPD-DSS), which considers speed-dependent energy consumption and drone-charging in detail. For this purpose, we formulate a comprehensive mixed-integer problem that aims to minimize the operational costs consisting of fuel consumption costs of the trucks, labor costs for the drivers, and energy costs of the drones. The speed at which a drone performs a flight must be selected from a discrete set. We introduce preprocessing steps to eliminate dominated speeds for a flight to reduce the problem size and use valid inequalities to accelerate the solution process.
The consideration of speed-dependent energy consumption leads to the fact that it is advisable to perform different flights at different speeds and not to consistently operate a drone at maximum speed. Instead, drone speed should be selected to balance drone range and speed of delivery. Our extensive computational study of a rural real-world setting shows that, by modeling energy consumption realistically, the savings in operational costs compared to truck-only delivery are significant but smaller than those identified in previously published work. Our analysis further reveals that the greatest savings stem from the fact that overall delivery time decreases compared to truck-only delivery, allowing costly truck-driver time to be reduced. The additional energy costs of the drone, however, are largely negligible.
\end{abstract}

\begin{keyword}
	Vehicle routing problem \sep Drones \sep Energy consumption
\end{keyword}
\end{frontmatter}

\section{Introduction}
This paper introduces the vehicle routing problem with drones and drone speed selection (VRPD-DSS). The integration of drones into transportation systems for parcel delivery has attracted significant attention in recent years \cite{chungs-2020, macrinag-2020, moshrefm-2021, ottoa-2018, rojasv-2021}. In contrast to delivery trucks, drones are not restricted to road networks and are, therefore, considered faster. However, they have several drawbacks. Drones have very limited capacities, e.\,g., in terms of the number of parcels or maximum payload, as well as limited range. Therefore, they are not well-suited as a stand-alone solution for delivery scenarios involving longer distances. One way to offset the disadvantages of drones is to combine trucks and drones into tandems \cite{agatzn-2018, daimler-2019, murrayc-2015, workhorse-2015}. In these cases, a truck carries one or more drones atop its roof and operates as a kind of mobile warehouse for drones but it also makes deliveries to customers.

Truck-drone tandems are considered a viable option for last-mile delivery, especially in rural areas \cite{boysenn-2021, gabaf-2020}, where distances between customers are greater because the population density is lower. Thus, trucks have to travel long distances to make deliveries to remote customers, while drones can fly more-direct routes. However, direct drone delivery from a warehouse or store, as tested by various companies for urban and suburban areas, is usually not possible due to the long distances. Hence, the trucks are able to work as range extenders for the drones. Moreover, drones can usually operate in rural areas without interference from tall buildings or other obstacles. This allows for less-stringent regulations concerning their use. Another aspect to consider is a potential customer-delivery zone. Residences in rural areas are mostly stand-alone houses. This allows a drone to drop the package somewhere on the customer's property, for example with a winch, which simplifies the delivery process.

One of the most important factors of drone use is their limited range. In the literature on truck-drone tandems, the range is often only an approximation based on a maximum distance or a time limit \cite{macrinag-2020, zhangj-2021}. However, both approaches are simplifications for the maximum energy consumption of a drone. The energy consumed by a drone depends on several factors. \citet{zhangj-2021} grouped these factors with respect to drone design, environment, drone dynamics, and delivery operations. These authors showed that, in particular, the total weight of the drone at takeoff, which consists of the payload plus drone and battery weight, and the airspeed have a major impact on energy consumption and, thus, on range. For a given drone configuration and just a single customer delivery per flight, the total weight at takeoff is fixed, and only the speed of the flight can be adjusted to vary the range. 

However, in most optimization approaches for truck-drone tandems, the range is either independent of the speed, or the speed is the same for all flights. Thus, speed is not included in the decision-making process. In the context of the problem considered here, i.e., trucks and drones make deliveries, to the best of our knowledge, variable drone speeds affecting the range are presented only in \cite{rajr-2020} for a single truck with multiple drones. They introduced a heuristic approach in which drone speeds can be dynamically adjusted and demonstrated that significant time-savings can be achieved with variable drone speeds. 

In contrast to \cite{rajr-2020}, we present for the first time an exact approach for routing truck-drone tandems that takes into account speed-dependent energy consumption and other relevant aspects such as recharging. Here, considering drone speed as a continuous decision variable is not practical because the speed affects the energy expended in a nonlinear manner \cite{zhangj-2021}. Therefore, we perform a discretization of the speed to obtain different discrete levels. The energy consumption for each discrete level can be determined in advance and, thus, a speed has to be selected for a flight. We call this new problem the vehicle routing problem with drones and drone speed selection and make the following additional contributions in this paper:
\begin{compactitem}
    \item We conduct numerical experiments on realistic instances for a rural scenario. The instances incorporate real-world routing between locations and realistic drone parameters. The results show that substantial cost-savings can be achieved by truck-drone tandems in comparison to traditional truck-only delivery for the given rural test instances. However, because of the more-realistic assumptions, the savings are not as high as those shown in other publications. 
    \item We show in our experiments that the greatest savings can be achieved by shortening delivery times, while the power costs of the drones are almost negligible.
    \item We find that the speed selected in advance has a large impact on the costs of the VRPD when only a single speed is available. In contrast, the VRPD-DSS is independent of this pre-selected speed and achieves at least the minimal costs of all VRPDs but usually provides better solutions.
    \item We present and prove preprocessing methods to eliminate dominated speeds for drone flights and unnecessary variables to efficiently reduce the problem size without excluding optimal solutions.
\end{compactitem}

The paper is structured as follows. A brief overview of the related literature is provided in Section~\ref{sec:Literature}. We then present the assumptions made to define the VRPD-DSS and explain the energy-consumption model used in this paper in detail in Section~\ref{sec:assumptions}. A mixed-integer linear programming formulation for the VRPD-DSS is introduced in Section~\ref{sec:problemDefinition}. Section~\ref{sec:Preprocessing} presents the newly developed preprocessing methods. Known and new valid inequalities to strengthen the model formulation are described in Section~\ref{sec:valid_inequalities}. The generation of the rural test instances and the results of our computational studies are reported in Section~\ref{sec:ComputationalStudies}. Finally, we conclude the paper in Section~\ref{sec:Conclusion}.

\section{Related literature}\label{sec:Literature}
The literature on optimization problems associated with drones is growing rapidly as shown in recent surveys \cite{chungs-2020, macrinag-2020, moshrefm-2021, ottoa-2018, rojasv-2021}. Therefore, we focus our brief review of the relevant literature mostly on the drone range concepts used and on problems where trucks and drones can perform deliveries. \citet{macrinag-2020} distinguished this class of problems from problem classes where only drones are able to make deliveries, either from stationary depots (drone delivery problem) or from mobile ground vehicles (carrier-vehicle problem with drones). We refer to the most recent surveys in \citet{chungs-2020}, \citet{macrinag-2020}, and \citet{moshrefm-2021} for a more detailed review of these problems.

Combined delivery by trucks and drones as tandems was introduced by \citet{murrayc-2015} as the flying sidekick traveling salesman problem (FSTSP) for \textit{a single truck with a single drone}. They limited the range of the drone by a maximum time it can be airborne and presented a mixed-integer linear programming (MILP) model as well as a heuristic approach to solve the FSTSP. \citet{agatzn-2018} presented a variant where drones can perform loops, i.e., start and end a flight at the same node. They called this problem the traveling salesman problem with drones (TSPD) and introduced two heuristics. In contrast to \cite{murrayc-2015}, the drone range is limited by a maximum flight distance. In addition, they varied the speed of the drone in the computational experiments and showed that a faster drone leads to significantly reduced costs. However, the range is independent from the speed. Many additional heuristic algorithms, e.g., \cite{campuzanog-2021, defreitasj-2020, haq-2020, jeongh-2019, poikonens-2019}, and exact approaches, e.g., \cite{bocciam-2021, boumanp-2018, dellamicom-2021a, dellamico-2021b, robertir-2021, vasquezs-2021}, have been designed to solve these two basic problems or closely related variants. Currently, the best-known exact approach for a single tandem with one drone is a branch-and-price algorithm introduced in \cite{robertir-2021}. Unlike the other approaches, \citet{jeongh-2019} used an approximation of a simple energy-consumption function that takes into account the loaded weight. In a recent heuristic approach \cite{campuzanog-2021}, the authors used a more detailed energy-consumption model that considers the parcel weight and a fixed speed. 

A natural extension of the basic problems FSTSP and TSPD is to consider \textit{multiple drones for one truck}, as in \cite{cavanis-2021, dellamico-2021c, murrayc-2020, rajr-2020}, for example. \citet{murrayc-2020} introduced the multiple flying sidekicks traveling salesman problem (mFSTSP) and focused on the scheduling of launch and retrieval operations of multiple drones to deal with the small space on a delivery truck. They present an MILP model to solve small instances and a heuristic algorithm for larger instances. In addition, they investigated different approaches to drone endurance, including an energy-consumption model based on parcel weight and speed. However, the speed is not part of the decision-making process. In their computational experiments, they demonstrated that not using an actual energy-consumption model often leads to under-utilization of resources or infeasible solutions in terms of energy consumption. In a subsequent paper \cite{rajr-2020}, the authors included the drone speed as a decision variable and called the resulting problem the mFSTSP with variable drone speeds. They introduced a heuristic approach to solve the problem and showed that variable drone speeds lead to substantial time-savings. 

Another extension of the FSTSP and the TSPD arises from using \textit{multiple tandems with one or more drones per truck}. This problem is introduced in \citet{wangx-2017} as the vehicle routing problem with drones (VRPD). As for the problem with one tandem, several heuristic algorithms, e.g. \cite{daknamar-2017, dipuglia-2020, kitjacharoenchaip-2019, liuy-2020, sacramentod-2019, schermerd-2019a, wangd-2019}, and exact approaches, e.g. \cite{dipuglia-2017, schermerd-2019b, tamkef-2021, wangz-2019}, have been presented for the VRPD and several related problems. Several approaches consider time windows \cite{dipuglia-2017, dipuglia-2020}, allow multiple visits to customers on the same flight \cite{liuy-2020, wangz-2019}, or enable the launch and retrieval of drones at discrete points on an arc \cite{schermerd-2019a}. However, only the approach presented by \citet{liuy-2020} considers a range that is not limited by a maximum time or distance. Instead, they used an approximation of a drone's energy consumption that depends on the loaded weight.

Different concepts for a drone's range are also applied in \textit{problems without combined deliveries}, e.g., in \cite{chengc-2020, poikonens-2020, dukkancio-2021}. \citet{chengc-2020} developed an exact algorithm for a drone delivery problem with multiple trips and with non-linear energy consumption based on the payload. \citet{poikonens-2020} studied a problem with one truck and multiple drones and proposed a heuristic algorithm. Drones are allowed to visit multiple customers on the same flight, but the truck cannot visit a node when a drone is in the air. The energy consumption used in their approach takes into account the loaded weight of the parcels. Two different drone speeds are tested in their computational experiments, but the speed has no impact on the expended energy. \citet{dukkancio-2021} considered a problem where drones are first transported to launch points by trucks, serve a customer, and then return to the truck to start the next service. The trucks remain at the launch points and perform no deliveries, while the drones serve the customers. The authors determined the energy consumption explicitly and used the speed of the drones as continuous decision variables. To solve this problem, they reformulated the non-linear model into a second-order cone-programming problem.

In summary, to the best of our knowledge, there is currently no approach for the VRPD that takes into account the actual energy consumption of drones and incorporates drone speed into the decision-making process. Additionally, in contrast to battery-switching, recharging the battery while the drone is on the truck has not yet been included. Therefore, we adapt and extend one of the currently best exact approaches for the VRPD, as presented in \cite{tamkef-2021}, to address these relevant and important additional features.

\section{Preliminary considerations}\label{sec:assumptions}
\subsection{Assumptions on drone operations and truck-drone interaction}
To model the VRPD-DSS, we make the following assumptions:
\begin{compactenum}[(a)]
	\item A truck can be equipped with one or more drones. However, each drone is associated with one truck exclusively. Therefore, that drone may not be launched or received by any other truck. This is reasonable since the technological effort required to coordinate multiple drones on one truck is high.
	\item Trucks and drones do not have to use the same distance metric in a network because trucks are bound by the road network whereas drones are not.
	\item Each drone operation comprises three steps. First, it has to be launched from the truck. Next, the drone performs a delivery to exactly one customer, and then, it returns to its associated truck.
	\item A drone must be launched and retrieved at nodes of the given network, i.e., the depot or customer locations. A drone must not start and end a flight at the same customer location. In addition, a truck must not return to an already-visited customer to retrieve a drone. Likewise, trucks and drones may return to the depot only once.	
	\item We consider service times at customer locations for truck as well as drone deliveries. We also take into account the time needed to prepare a launch. The time required to retrieve a drone is not considered since we assume that the drones operate autonomously.
	\item A drone can fly at different speeds. The speed for a flight can be selected from a discrete set of available speeds and is constant during the whole flight (steady flight). The speed of the truck is given and is not part of the decision-making process.
	\item A drone expends energy by flying and hovering. Its energy consumption while flying depends on the selected speed and the weight. Hovering occurs in two cases: first, if the drone has to wait at the retrieval location and, second, while it is serving a customer. Other operations like climb and descent are not taken into consideration. We assume that the amount of time not spent hovering or in steady flight is negligible.
	\item The energy that can be expended during flight and hovering is limited by the available energy of the drone battery. However, the battery can be charged at a constant rate when the drone is on top of the truck. It cannot be recharged while the drone is being prepared for a launch.
\end{compactenum}

\subsection{Energy-consumption model} \label{sec:energy_use_model}
An energy-consumption model for drone operations is essential in our study. We use the model presented in \cite{stolaroffj-2018} for two reasons. First, multi-rotor drones are considered, corresponding to the technology currently used in truck-drone tandems. Second, the model provides different power functions for steady flight and hovering. Thus, we are able to distinguish between these two operations and can include the speed of a flight into energy considerations and decision-making. We use the octocopter presented in \cite{stolaroffj-2018} in all our tests (see Table~\ref{tab:parameters_drone} for parameters) because octocopters are capable of carrying heavier packages and, thereby, are suitable for truck-drone tandems.

A drone consumes energy when hovering and flying because it has to resist both gravity and drag forces. The latter are caused by the forward motion of the drone and the wind. However, we assume perfect ambient conditions such as no wind. All of a drone's activities in the air are accomplished by adjusting the speed of each rotor. This leads to the required thrust and pitch to perform an operation, e.g., moving forward at a desired speed. The thrust of the individual rotors differs depending on the type of activity. On average, however, they are almost equal and, together, exactly balance gravity and drag forces. Therefore, the total required thrust $T$ can be described as the sum
\begin{equation}
	T = F^{\text{g}} + F^{\text{d}}
\end{equation}
of gravity $F^{\text{g}}$ and the total drag forces $F^{\text{d}}$.

As described in \cite{stolaroffj-2018}, it is convenient to divide the drone into three relevant components: drone body (db), battery (b), and a customer's package (p). All three components $i \in DP = \left\{ \text{db}, \text{b}, \text{p} \right\}$ have the same attributes: mass $m_i$, drag coefficient $c^{\text{d}}_{i}$, and projected area $A_i$ perpendicular to the direction of travel. Hence, gravity $F^{\text{g}}$ is equal to
\begin{equation}
	 F^{\text{g}}  = g \sum_{i \in DP} m_i,
\end{equation}
where $g$ is the standard acceleration due to gravity. The total drag force $F^{\text{d}}$ for steady flight with speed $v$ can be estimated with the equation
\begin{equation}
	F^{\text{d}} = \frac{1}{2} \rho v^2 \sum_{i \in DP}   c^{\text{d}}_{i} A_i,
\end{equation}
where $\rho$ is the density of air. 

For a drone with $n$ rotors of diameter $D$, we are now able to calculate the theoretical minimum power required to hover as
\begin{equation}
	P^{\text{H,min}} = \frac{T^{\frac{3}{2}}}{\sqrt{\frac{1}{2} \pi n D^2 \rho}}.
\end{equation}
Note that for hovering, $v = 0$ and, therefore, $F^{\text{d}} = 0$ and $T = F^{\text{g}}$. The theoretical minimum power for steady flight with speed $v$ is given by
\begin{equation}
P^{\text{F,min}} = T\left( v \sin \alpha +v^{\text{i}} \right)
\end{equation}
with pitch angle $\alpha$ and induced speed $v^{\text{i}}$. Pitch angle $\alpha$ is the tilt of the drone in the direction of travel and can be determined by
\begin{equation}
	\alpha = \arctan \left( \frac{F^{\text{d}}}{F^\text{g}} \right).
\end{equation}
The induced speed at the rotors $v^{\text{i}}$ can be computed by solving the implicit equation
\begin{equation}
	v^{\text{i}} - \frac{2T}{\pi n D^2 \rho \sqrt{\left( v \cos \alpha \right)^2 + \left( v \sin \alpha + v^{\text{i}} \right)^2}} = 0,
\end{equation}
which can easily be done numerically. Finally, we consider an overall power efficiency of the drone $\eta$ and also use a safety coefficient $\sigma$. The latter is used to reflect circumstances not considered in the energy-consumption function, such as wind and temperature, and to prevent an underestimation of the power consumption. Hence, the expended power during hover $P^{\text{H}}$ and forward flight $P^{\text{F}}$ is expressed as
\begin{equation}
P^{\text{H}} = \frac{P^{\text{H,min}}}{\eta} \left(1 + \sigma\right) \text{ and }  P^{\text{F}} = \frac{P^{\text{F,min}}}{\eta} \left(1 + \sigma\right).
\end{equation}

In our studies, we vary two parameters that are relevant for computing the energy consumption of a given drone configuration: speed $v$ and the mass of the package for a customer $m_{\text{p}}$. Hence, we introduce the expended power for hovering and steady forward flight as functions $P^{\text{H}}(m_{\text{p}})$ and $P^{\text{F}}(m_{\text{p}},v)$. Note that the drag coefficient $c^d_{\text{p}}$ and the projected area $A^d_{\text{p}}$ also change with different packages due to their different shapes. However, these effects are negligible compared to those of other two factors. 

In the following, we analyze the trade-off between drone speed and energy consumed for the given model. Consider the examples shown in Figure~\ref{fig:powerAndEnergy} to better understand the relationship between package mass, speed, and corresponding expended power (\ref{fig:powerAndEnergyA}) and energy consumption (\ref{fig:powerAndEnergB}, \ref{fig:powerAndEnergyC}) for the introduced model. We distinguish between two different energy-consumption scenarios. First, we take into account only the energy consumption for flying \SI{1000}{\metre} (\ref{fig:powerAndEnergB}). Secondly, we assume that the drone flies \SI{1000}{\metre} and then has to hover and wait for the truck, which arrives \SI{180}{\second} after the drone takes off (\ref{fig:powerAndEnergyC}). This is a relevant scenario for truck-drone tandems and is required for synchronization. Since all flights are faster than \SI{180}{\second}, the drone must hover regardless of the speed. However, hovering time increases with increasing drone speed.

\begin{figure}
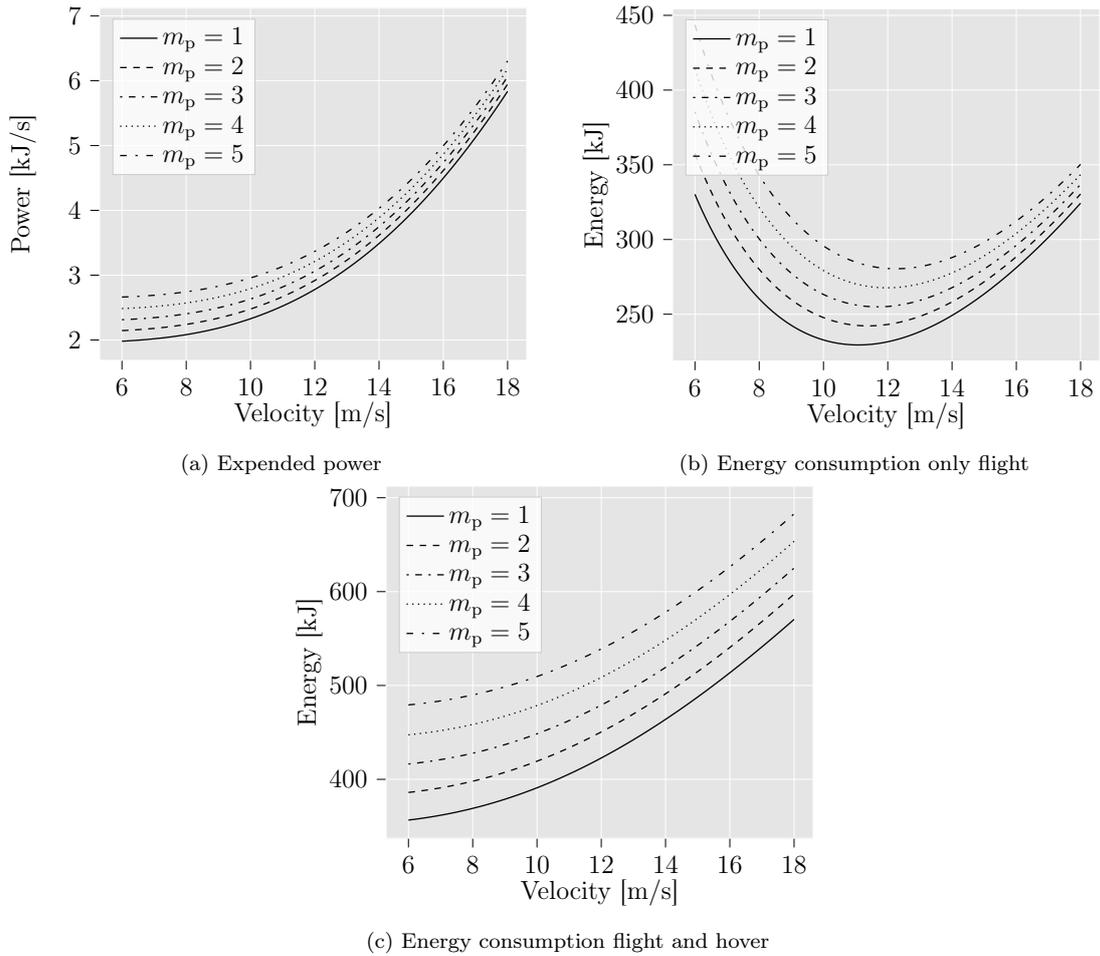

	\centering
	\begin{subfigure}{0.45\textwidth}
		\scalebox{.82}{\input{Figures/powerOverSpeed}}
		\subcaption{Expended power}
		\label{fig:powerAndEnergyA}
	\end{subfigure}
	\begin{subfigure}{0.45\textwidth}
		\scalebox{.82}{\input{Figures/energyOverSpeed}}
		\subcaption{Energy consumption only flight}
		\label{fig:powerAndEnergB}
	\end{subfigure}
	\begin{subfigure}{0.45\textwidth}
		\scalebox{.82}{\input{Figures/energyPlusHoverOverSpeed}}
		\subcaption{Energy consumption flight and hover}
		\label{fig:powerAndEnergyC}
	\end{subfigure}
	\caption{Expended power, energy consumption to fly 1000\,m, and energy consumption to fly 1000\,m plus hovering up to 180\,s are reached for different speeds and package masses $m_{\text{p}}$}
	\label{fig:powerAndEnergy}
\end{figure}

The expended power increases monotonously with the drone speed and is higher for larger package masses. In contrast, the energy consumption for steady flight initially decreases and then increases again with increasing speed; this is due to the trade-off between expended power and flight duration. At low speeds, less power is expended, but the flight duration is longer. In the reverse case, exactly the opposite is true; higher speeds expend more power, but the flight is faster. As a result, the range of the drone depends significantly on the selected speed, and faster drone speeds are not automatically better. Additionally, if waiting for the truck and hovering were not necessary in general, we could determine an optimal speed for a given package mass $m_{\text{p}}$ and exclude all slower speeds. Speeds slower than the optimum speed cause higher energy consumption plus longer flight duration. However, hovering might be necessary for truck-drone tandems. If we include energy consumption for hovering, slower speeds usually have a smaller total energy consumption, as shown in Figure \ref{fig:powerAndEnergyC}. 

Hence, faster drones may lead to faster deliveries but have a smaller range and higher energy consumption. The latter is especially important if the battery has to be recharged on the truck before the next flight and is not swapped. Therefore, it is essential to include the speed of the drones in the route-planning process of truck-drone tandems when their energy consumption is considered. 

\section{Mixed-integer linear programming formulation for the VRPD-DSS} \label{sec:problemDefinition}
\allowdisplaybreaks

\subsection{Notation}
\subsubsection{Sets and parameters}
We distinguish five different but partly overlapping sets of nodes. First, we denote the set of all customers by $C = \{1, \dots, c\}$. Not all customers can be served by a drone due, for example, to weight restrictions or customer preferences. Therefore, we introduce subset $\bar{C} \subseteq C$ as the set of all customers that can be served by a drone. Furthermore, we introduce nodes $0$ and $c+1$ as start depot and end depot for the same physical location. We then define $N = \{0\} \cup C \cup \{c+1\}$ as the set of all nodes, $N_0 = N \backslash \{c+1\}$ as the set of all departure nodes, and $N_+ = N \backslash \{0\}$ as the set of all arrival nodes.

A homogeneous fleet of truck-drone tandems $F$ is available to supply all customers. Each tandem $f \in F$ consists of a single truck $f$ and a set of drones $D$. The distance from node $i \in N$ to node $j \in N$ for a truck is denoted by $\delta^{\text{T}}_{ij}$ and the corresponding travel time by $\tau^{\text{T}}_{ij}$. The distance between two nodes $i$ and $j$ for the drone is represented by $\delta^{\text{D}}_{ij}$. In contrast to a truck, a drone can travel at different speeds $v \in V$, where $V$ is the set of possible speeds. We introduce $\tau^{\text{D},v}_{ij} = \delta^{\text{D}}_{ij} / v$ as the travel time of a drone from node $i$ to node $j$ at speed $v$. In addition to travel times, we consider service times $\tau_j^{\text{S,T}}$ and $\tau_j^{\text{S,D}}$ for truck and drone deliveries for each node $j \in N$. The amount of time needed to prepare a launch is represented by $\tau^{\text{L}}$. We also introduce the maximum amount of time a drone is allowed to hover before retrieval as $\tau^{\text{MH}}$ and the maximum time a truck is allowed to remain stationary at a node as $\tau^{\text{MS}}$. Both times can be limited to reflect more-realistic scenarios. The maximum duration of a route is denoted by $M$.

A drone flight is defined as triple $\left(i,j,k\right)$ with node $i \in N_0$ as the launch node, $j \in \bar{C}$ as the customer node, and $k \in N_+$ as the retrieval node. An operation $\left(i,j,k\right)^v$ represents the execution of the flight $\left(i,j,k\right)$ with speed $v$. Assuming that both legs of the flight ($i$ to $j$ and $j$ to $k$) are executed at the same speed $v$, we can determine the time $\tau^v_{ijk}$ of an operation with
\begin{equation}
\tau^v_{ijk} = \tau^{\text{D}, v}_{ij} + \tau_j^{\text{S,D}} + \tau^{\text{D}, v}_{jk}.
\end{equation}
The energy consumption of an operation can be computed in a similar manner. During the flight to the customer $j$, the drone must carry the package with mass $m_j$, whereas no package is transported on the way to retrieval node $k$. We assume that the drone hovers at customer $j$'s property to deliver the package and that the mass is constant ($m_j$) over the delivery time to take additional energy consumption, e.g., for using the winch, into account. Therefore, the energy consumption for operation $\left(i,j,k\right)^v$ corresponds to
\begin{equation}
e^v_{ijk} = \tau^{\text{D}, v}_{ij} \cdot P^{\text{F}}\left(m_j, v\right) + \tau_j^{\text{S,D}}  \cdot P^{\text{H}}\left( m_j \right) + \tau^{\text{D}, v}_{jk} \cdot P^{\text{F}}\left(0, v\right).
\end{equation}

The battery of a drone has a nominal energy of $E$. However, in order to increase its service life, a battery should usually not be fully discharged. Hence, we use $\epsilon$ as the maximum depth of discharge (DoD) in percent. A DoD of 0\% means the battery is fully charged, while at a DoD of 100\%, the battery is empty. Therefore, the maximum available energy is $\epsilon E$. Furthermore, it can be recharged with a fixed charging rate of $P^{\text{C}}$.

$W_v$ is the set of feasible drone operations for speed $v \in V$. Each drone speed $v$ leads to a different set of feasible operations since the speed has a large impact on the range. The set of all feasible drone operations is $W = \bigcup_{v \in V} W^v$. An operation $\left(i,j,k\right)^v$ is feasible only under three conditions: (i) all nodes have to be pairwise different; (ii) customer $j$ can be supplied by a drone, i.e., $j \in \bar{C}$; and (iii) the minimum energy consumption of operation $\left(i,j,k\right)^v$ does not exceed the maximum available energy $\epsilon E$ of the battery. In addition to the energy consumption $e^v_{ijk}$, we can take into account the minimum hovering time at retrieval node $k$ as the truck could arrive after the drone, although it travels directly from $i$ to $k$. Thus, operation $\left(i,j,k\right)^v$ is feasible for $i \in N_0, j \in \bar{C}, k \in N_+, i \neq j, i \neq k, j \neq k$, if
\begin{equation}
	e^{v}_{ijk} + \max(\tau^{\text{T}}_{ik} - \tau^v_{ijk}, 0) \cdot P^{\text{H}}(0) \leq \epsilon E.
\end{equation}

Finally, we define the cost parameters: (i) $\lambda$ as fuel cost per distance unit traveled by a truck, (ii) $\beta$ as cost per time unit of working time of a truck driver, and (iii) $\gamma$ as cost per energy unit expended by the drone.

\subsubsection{Decision variables}
Several decision variables are required to describe the problem as an MILP:
\begin{compactitem}
    \item $x_{ij}^f = 1$ if truck $f \in F$ drives directly from node $i \in N_0$ to node $j \in N_+$ and, otherwise, 0.
    \item $y_{ijk}^{fdv} = 1$ if drone $d$ of tandem $f \in F$ performs operation $\left( i,j,k \right)^v$ and, otherwise, 0.
    \item $q_{i}^f = 1$ if node $i \in N$ is visited by truck $f \in F$ and, otherwise, 0.
    \item $b_{ij}^f = 1$ if nodes $i,j \in C$ with $j>i$ are visited by truck $f \in F$ and, otherwise, 0.
    \item $u_i^f \in \mathbb{N}_0$ specifies the position of customer $i \in C$ on the route of truck $f \in F$. 
    \item $p_{ij}^f = 1$ if node $i \in N_0$ precedes nodes $j \in N_+$ in the tour of truck $f \in F$ and, otherwise, 0.
    \item $z_{lik}^{fd} = 1$ if drone $d \in D$ of tandem $f \in F$ performs a flight with launch node $i \in N_0$ and retrieval node $k \in N_+$ and truck $f$ visits node $l \in C$ in between and, otherwise, 0.
    \item $att_i^f \in \mathbb{R}_+$ represents the arrival time of truck $f \in F$ at node $i \in N$.
    \item $dtt_i^f \in \mathbb{R}_+$ represents the departure time of truck $f \in F$ from node $i \in N$.
    \item $atd_i^{fd} \in \mathbb{R}_+$ represents the arrival time of drone $d \in D$ of tandem $f \in F$ at node $i \in N$.
    \item $dtd_i^{fd} \in \mathbb{R}_+$ represents the departure time of drone $d \in D$ of tandem $f \in F$ from node $i \in N$.
    \item $htd_i^{fd} \in \mathbb{R}_+$  represents the amount of time that drone $d \in D$ of tandem $f \in F$ hovers at node $i \in N$.
    \item $ltd_i^{fd} \in \mathbb{R}_+$ represents the amount of time that is used for drone $d \in D$ of tandem $f \in F$ to be loaded at node $i \in N$.
    \item $r_i^{fd} \in \left[(1-\epsilon) E, E \right]$ represents the residual energy of drone $d \in D$ of tandem $f \in F$ when arriving at node $i \in N$ or at reunion with truck $f$ if $i$ is a retrieval node. 
    \item $w_{ij}^{fd} \in \left[ 0,1 \right]$ represents the share of travel time $\tau^{\text{T}}_{ij}$ on arc $\left(i,j\right)$ that is used by drone $d \in D$ of tandem $f \in F$ for recharging.
    \item $tec^{fd} \in \mathbb{R}_+$ represents the total energy consumption of drone $d$ of tandem $f$.
\end{compactitem}

The relationship between time variables of trucks and drones as well as energy-related variables of drones is shown in Figure~\ref{fig:time_and_energy}. Note that we omit the indices for truck and drone as we consider only one vehicle of each type in this example. 
\begin{figure}
	\centering
	\begin{tikzpicture}
\draw[thick, -Triangle] (-9pt,0) -- (10cm,0) node[font=\scriptsize,below left=3pt and -8pt]{Time};
\draw[thick, -Triangle] (-6pt,-3pt) -- (-6pt,5cm) node[font=\scriptsize,above left=3pt and -40pt]{Residual energy};
\draw (0 cm,3pt) -- (0 cm,-3pt);
\draw (4.75 cm,3pt) -- (4.75 cm,-3pt);
\draw (5.75 cm,3pt) -- (5.75 cm,-3pt);
\draw (6.5 cm,3pt) -- (6.5 cm,-3pt);
\draw (9 cm,3pt) -- (9 cm,-3pt);
\draw (-6pt, 4.5cm) -- (0pt, 4.5cm);
\draw[dashed] (-6pt, 1.50cm) -- (5.75cm, 1.50cm);
\draw[dashed](-6pt, 3cm) -- (9cm, 3cm);

\node[font=\scriptsize, text height=1.75ex, text depth=.5ex] at (0cm,+.5) {$dtt_i = dtd_i$};

\node[font=\scriptsize, text height=1.75ex, text depth=.5ex] at (4.75cm,+.5) {$atd_k$};

\node[font=\scriptsize, text height=1.75ex, text depth=.5ex] at (5.75cm,+.5) {$att_k$};
\node[font=\scriptsize, text height=1.75ex, text depth=.5ex] at (6.50cm,+.5) {$dt_k$};
\node[font=\scriptsize, text height=1.75ex, text depth=.5ex] at (9.00cm,+.5) {$att_l$};
\node[font=\scriptsize, text height=1.75ex, text depth=.5ex] at (-.4,4.50) {$r_i$};
\node[font=\scriptsize, text height=1.75ex, text depth=.5ex] at (-.4,1.50) {$r_k$};
\node[font=\scriptsize, text height=1.75ex, text depth=.5ex] at (-.4,3) {$r_l$};

\draw [thick, decorate, decoration={brace,amplitude=5pt}] (2,-0.2)  -- +(-2,0) 
node [black,midway,below=4pt, font=\scriptsize] {$\tau^{\text{D},v}_{ij}$};
\draw [thick, decorate, decoration={brace,amplitude=5pt}] (2.75,-0.2)  -- +(-0.75,0) 
node [black,midway,below=4pt, font=\scriptsize] {$\tau^{\text{S,D}}$};
\draw [thick, decorate, decoration={brace,amplitude=5pt}] (4.75,-0.2)  -- +(-2,0) 
node [black,midway,below=4pt, font=\scriptsize] {$\tau^{\text{D},v}_{jk}$};
\draw [thick, decorate, decoration={brace,amplitude=5pt}] (5.75,-0.2)  -- +(-1,0) 
node [black,midway,below=4pt, font=\scriptsize] {$htd_k$};
\draw [thick, decorate, decoration={brace,amplitude=5pt}] (6.50,-0.2)  -- +(-0.75,0) 
node [black,midway,below=4pt, font=\scriptsize] {$ltd_k$};
\draw [thick, decorate, decoration={brace,amplitude=5pt}] (4.75,-0.9)  -- +(-4.75,0) 
node [black,midway,below=4pt, font=\scriptsize] {$\tau^{\text{v}}_{ijk}$};

\draw [thick, decorate, decoration={brace,amplitude=5pt}] (5.75,-1.5)  -- +(-5.75,0) 
node [black,midway,below=4pt, font=\scriptsize] {$\tau^{\text{T}}_{ik}$};
\draw [thick, decorate, decoration={brace,amplitude=5pt}] (6.5,-1.5)  -- +(-.75,0) 
node [black,midway,below=4pt, font=\scriptsize] {$\tau^{\text{S,T}}$};
\draw [thick, decorate, decoration={brace,amplitude=5pt}] (9,-1.5)  -- +(-2.5,0) 
node [black,midway,below=4pt, font=\scriptsize] {$\tau^{\text{T}}_{kl}$};

\node[font=\scriptsize, text height=1.75ex, text depth=.5ex] at (-1cm,-.6) {Drone};
\node[font=\scriptsize, text height=1.75ex, text depth=.5ex] at (-1cm,-1.5) {Truck};

\draw [thick, decorate, decoration={brace,amplitude=5pt}] (-.6,1.75)  -- +(0,2.75)
node [black,midway,left=4pt, font=\scriptsize] {$e^{v}_{ijk}$};

\draw (0 cm, 4.5 cm) -- node[font=\scriptsize, text height=1.75ex, text depth=.5ex, above = 6pt ] {$P^{\text{F}}(m_j, v)$} (2 cm,3cm);
\draw (2 cm, 3 cm) -- node[font=\scriptsize, text height=1.75ex, text depth=.5ex, above = 6pt ] {$P^{\text{H}}(m_j)$} (2.75 cm, 2.75cm);
\draw (2.75 cm, 2.75 cm) -- node[font=\scriptsize, text height=1.75ex, text depth=.5ex, above = 3pt ] {$P^{\text{F}}(0, v)$} (4.75 cm, 1.75cm);
\draw (4.75 cm, 1.75 cm) -- node[font=\scriptsize, text height=1.75ex, text depth=.5ex, above = -0.5pt ] {$P^{\text{H}}(0)$} (5.75 cm, 1.50 cm);
\draw (5.75 cm, 1.50 cm) -- node[font=\scriptsize, text height=1.75ex, text depth=.5ex, above = -4.5pt] {$P^{\text{C}}$} (9 cm, 3.00 cm); 
\end{tikzpicture}
	\caption{Relation between variables for resources time and energy}
	\label{fig:time_and_energy}
\end{figure}
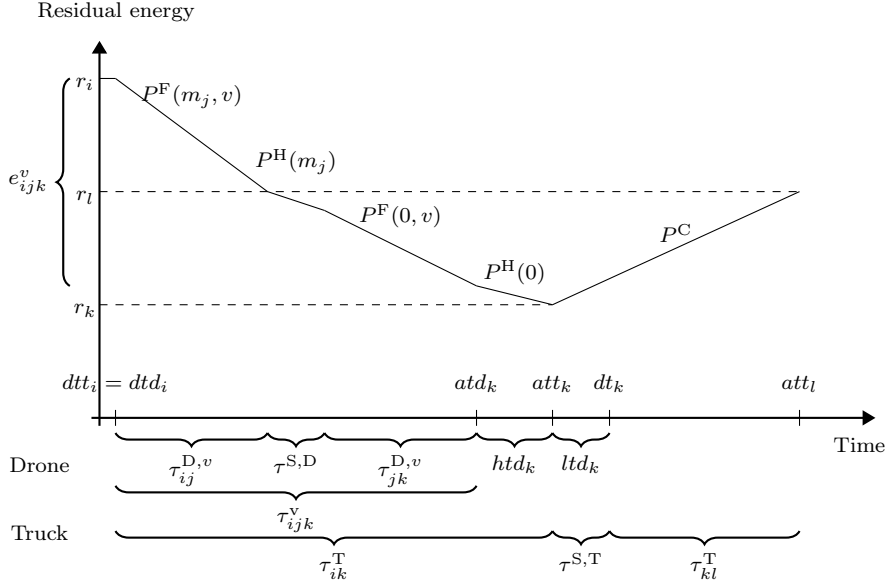
Both vehicles leave node $i$ at the same time ($dtt_i = dtd_i)$. The drone performs operation $\left( i,j,k \right)^v$ and arrives at retrieval node $k$ at time $atd_k = dtd_i + \tau^{v}_{ijk}$. Its residual energy decreases from $r_i$ to $r_i - e^{v}_{ijk}$ on arrival at node $k$. The truck arrives at node $k$ at $att_k = dt_i + \tau^{\text{T}}_{ik} > atd_k$. Thus, the drone must hover for $htd_k = att_k - atd_k$ time units. The residual energy of the drone at its reunion with the truck at node $k$ is $r_k = r_i - e^{v}_{ijk} - htd_k  P^{\text{H}}(0)$. After the arrival of the truck, customer $k$ is served by the driver. The drone is recharged on the truck for $ltd_k = \tau_k^{\text{S,T}}$ time units during the service to customer $k$. Finally, the truck departs from customer location $k$ at time $dt_k = att_k + \tau_k^{\text{S}}$ and travels to node $l$. The drone is recharged during the complete travel time $\tau^{\text{T}}_{kl}$ $\left(w_{kl} = 1 \right)$ while atop the truck, and its residual energy is $r_l = \tau^{\text{T}}_{kl} P^{\text{C}}$ when the tandem reaches node $l$.

\subsection{Model}
The VRPD-DSS can be formulated as an MILP with the notation and decision variables introduced above. To facilitate understanding, the objective function and the various groups of constraints are presented in several sections. 
\subsubsection{Objective function}
The objective function
\begin{align}
\min \lambda \sum_{f \in F} \sum_{i \in C}\sum_{j \in N_+} \delta^{\text{T}}_{ij}x^f_{ij} + \beta \sum_{f \in F} att_{c+1}^f + \gamma \sum_{f \in F} \sum_{d \in D} tec^{fd} \label{eq:objfunction}
\end{align}
minimizes the total operational costs. The first term of \eqref{eq:objfunction} corresponds to the fuel-consumption costs of the total distance traveled by all trucks. The second term represents the total working-time costs of all drivers, and the total energy costs of all drones are determined by the last term.

\subsubsection{Complete demand satisfaction}
Constraints
\begin{align}
& \sum_{f \in F} \left( q_{j}^f + \sum_{i \in N_0} \sum_{k \in N_+} \sum_{d \in D} \sum_{v \in V} y_{ijk}^{fdv} \right) = 1 \quad \forall j \in C 
\label{eq:completedemand}
\end{align}
guarantee that all packages must be delivered and ensure that each customer is visited only once by truck or drone.

\subsubsection{Truck routing}
We introduce constraints
\begin{align}
&\sum_{i \in N_0} x_{ij}^f = q_j^f \quad \forall ~ j \in N_+,  f \in F \label{eq:inflow}\\
&\sum_{j \in N_+} x_{ij}^f = q_i^f \quad \forall ~ i \in N_0, f \in F \label{eq:outflow}\\
&u_{i}^f - u_{j}^{f} + c \cdot x_{ij}^f + (c-2) \cdot x_{ji}^f \leq c - 1 \quad \forall ~ i,j \in C, f \in F \label{eq:liftedMTZ}\\
&c \cdot p_{ij}^f - \left(c-1\right) \leq u_j^f - u_i^f \quad \forall ~ i,j \in C, f \in F \label{eq:orderingtrucknodes1} \\
&p_{0,i}^f = q^f_i \quad \forall ~ i \in C, f\in F \label{eq:depotBeforeCustomer}\\
&p_{i,c+1}^f = q^f_i \quad \forall ~ i \in C, f \in F \label{eq:depotAfterCustomer}
\end{align}
to ensure feasible truck routes. 
Constraints \eqref{eq:inflow} and \eqref{eq:outflow} preserve the flow of a vehicle $f$.
Inequalities \eqref{eq:liftedMTZ} are lifted Miller--Tucker--Zemlin subtour elimination constraints. However, their primary purpose is not to prevent 
subtours but to correctly determine variables $u$.
Variables $u$ are used in constraints \eqref{eq:orderingtrucknodes1} to set precedence variables $p$. If $p^f_{ij} = 1$, then node 
$j$ is visited after node $i$ by truck $f$ and $u^f_j$ is larger than $u^f_i + 1$.
Equations \eqref{eq:depotBeforeCustomer} and \eqref{eq:depotAfterCustomer} guarantee that the depot precedes (node 0) and succeeds (node $c+1$) customer $i$ on the route of truck $f$ if and only if customer $i$ is visited by truck $f$.

Additionally, we ensure that either $p_{ij}^f$ or $p_{ji}^f$ equals 1 if and only if both nodes $i$ and $j$ are visited by the truck. 
Hence, we impose $p_{ij}^f + p_{ji}^f = q_{i}^f \cdot q_{j}^f ~ \forall i,j \in C, j > i$. As the right-hand side of this inequality is nonlinear, we use variables $b_{ij}^f$ and the following inequalities to linearize this relationship:
 \begin{align}
 &p_{ij}^f + p_{ji}^f = b_{ij}^f \quad \forall ~ i,j \in C, j > i, f \in F \label{eq:orderingtrucknodes2} \\
 &q_{i}^f \leq b_{ij}^f \quad \forall ~ i,j \in C,j > i, f \in F \label{eq:linearization1} \\
 &q_{j}^f \leq b_{ij}^f \quad \forall ~ i,j \in C,j > i, f \in F \label{eq:linearization2}\\
 &q_{i}^f + q_{j}^f \leq 1 + b_{ij}^f \quad \forall ~ i,j \in C, j > i, f \in F \label{eq:linearization3}
\end{align}
 
\subsubsection{Coordination of drone actions}
A drone can perform various actions. Considering our assumptions, it can
\begin{compactitem}
	\item start or end a flight at a node,
	\item be airborne while the truck visits a node,
	\item recharge while the truck is traveling from one node to another,
	\item recharge at a node, or
	\item idle on the truck either at a node or on an arc.
\end{compactitem} 
However, it can never perform multiple activities at the same time, and its actions must be coordinated with the activities of its truck. Since times when the drone is idle do not need to be modeled separately and recharging at a node is included in the energy-consumption constraints in Section~\ref{sec:constraintsEnergyConsumption}, we need to introduce only constraints
\begin{align}
&\sum_{i \in N_0} \sum_{j \in C} \sum_{v \in V} y_{ijk}^{fdv} \leq q_k^f \quad \forall ~ k \in N_+, f \in F, d \in D \label{eq:endnodedronetruck} \\
&w_{ik}^{fd} \leq x_{ik}^f \quad \forall ~ i,k \in N, f \in F, d \in D \label{eq:loadingOnArc} \\
& z_{lik}^{fd} \geq p_{il}^f + p_{lk}^f + \sum_{j \in C} \sum_{v \in V} y_{ijk}^{fdv} - 2 \quad \forall ~ l \in C, i \in N_0, k \in N_+, f \in F, d 
\in D \label{eq:droneInAir}
\end{align}
to model the coordination of the remaining activities.
Constraints \eqref{eq:endnodedronetruck} assure that the retrieval node of a drone flight has to be visited by the truck.
Constraints \eqref{eq:loadingOnArc} guarantee that a charging activity of drone $d$ on arc $\left(i,k\right)$ may occur only if truck $f$ travels from $i$ to $k$.
Constraints \eqref{eq:droneInAir} determine whether drone $d$ is in the air at node $l$ while the truck is visiting $l$. Thus, these constraints determine whether a drone is available to start an activity at node $l$ or not. Consider Figure \ref{fig:exampleDroneFlight} for a better understanding. Truck $f$ visits customer $l$ between nodes $i$ and $k$; therefore, $p_{il}^f = 1$ and $p_{lk}^f = 1$. At the same time, drone $d$ starts a flight with any speed $v$ at node $i$, visits a customer $j$, and is retrieved by the truck at $k$. Thus, drone $d$ is in the air at node $l$ and constraints \eqref{eq:droneInAir} enforce  $z_{lik}^{fd} = 1$. 
\begin{figure}[ht]
	\centering
		\begin{tikzpicture}
		\node[circle, draw] (i) at (0,0) {$i$};
		\node[circle, draw] (l) at (4,0) {$l$};
		\node[circle, draw] (k) at (8,0) {$k$};
		\node[circle, draw] (j) at (4,2) {$j$};
		\node (y1) at (4,2.7) {$\sum\limits_{j \in C} \sum\limits_{v \in V} y_{ijk}^{fdv} = 1$};
		\node (z) at (4, -0.7) {$z_{lik}^{fd} \geq 1 + 1 + 1 - 2 = 1 $};
		\graph{
			(i) ->[dotted, "$p_{il}^f = 1$"] (l) ->[dotted, "$p_{lk}^f = 1$"] (k);
			(i) ->[bend left = 20, dashed] (j) ->[bend left = 20, dashed] (k);
	};
	\end{tikzpicture}
	\caption{Visualization of constraints \eqref{eq:droneInAir} that determine if drone $d$ of tandem $f$ is in air at node $l$.}
	\label{fig:exampleDroneFlight}
\end{figure}
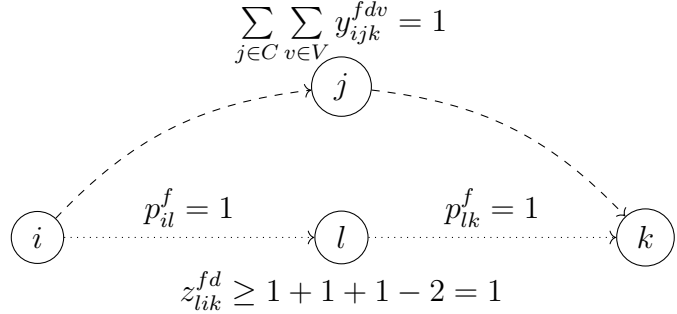

Now, multiple actions that take place simultaneously can be prevented with
\begin{align}
&\sum_{n \in N_+} w_{lm}^{fd} + \sum_{v \in V} \sum_{m \in C}\sum_{n \in N_+} y_{lmn}^{fdv} \leq q_l^f - \sum_{i \in N_0}\sum_{k \in N_+} z_{lik}^{fd}
\quad \forall ~ l \in C, f \in F, d \in D. \label{eq:onlyOneActivity}
\end{align}
Constraints \eqref{eq:onlyOneActivity} guarantee that drone $d$ can either be charged on an arc leaving node $l$ or can start an operation from node $l$. However, these actions are possible only if node $l$ is visited by truck $f$ and drone $d$ is not already in the air at node $l$. 

\subsubsection{Temporal synchronization between trucks and drones}
We introduce constraints
\begin{align}
&att_k^f \geq dtt_i^f + \tau^{\text{T}}_{ik} \cdot  x_{ik}^f - M_i \left(1-x_{ik}^f\right) \quad \forall ~ i \in N_0, k \in N_+, f \in F \label{eq:timetruckLB} \\
&\begin{multlined}
    atd_k^{fd} \geq dtd_i^{fd} + \sum_{v \in V} \sum_{j \in C} \tau^v_{ijk}  y_{ijk}^{fdv} - M_i \left(1 - \sum_{v \in V} \sum_{j\in C} y_{ijk}^{fdv} \right) \\ \forall ~ f \in F, d \in D, i \in N_0, k \in N_+ \label{eq:arrivalDroneCustomerLB} 
\end{multlined}\\
&\begin{multlined}
atd_k^{fd} \leq dtd_i^{fd} + \sum_{v \in V} \sum_{j \in C} \tau^v_{ijk} y_{ijk}^{fdv} + M_i \left(1 - \sum_{v \in V} \sum_{j \in C} y_{ijk}^{fdv} \right) \\ \forall ~ f \in F, d \in D, i \in N_0, k \in N_+  \label{eq:arrivalDroneCustomerUB}
\end{multlined}\\
&htd_i^{fd} \geq att_i^f - atd_{i}^{fd} \quad \forall ~ i \in C, f \in F, d \in D  \label{eq:hoverTimeDrone} \\
&htd_i^{fd} \leq q_i^f \tau_i^{\text{MH}} \quad \forall ~ i \in C, f \in F, d \in D \label{eq:lb_hovering}\\
&ltd_i^{fd} \leq q_i^f \tau_i^{\text{MS}} \quad \forall ~ i \in C, f \in F, d \in D \label{eq:lb_loading}
\end{align}
to synchronize the activities of trucks and their associated drones with respect to time.
Constraints \eqref{eq:timetruckLB} bound the arrival time $att_k^f$ of truck $f$ at node $k$ if truck $f$ travels directly from node $i$ to node $j$, where $M_i = M - \tau^{\text{T}}_{i,c+1}$ is the latest possible departure time from node $i$. Constraints \eqref{eq:arrivalDroneCustomerLB} and \eqref{eq:arrivalDroneCustomerUB} set the arrival time $atd_k^{fd}$ of drone $d$ belonging to tandem $f$ at reunification node $k$ if it performs operation $\left(i,j,k\right)^v$. Hover time $htd_k^{fd}$ of drone $d$ at node $k$ is defined by constraints \eqref{eq:hoverTimeDrone}. In case drone $d$ arrives before its corresponding truck $f$ at node $i$ $\left(atd_i^{fd} < att_{i}^{f}\right)$, it is equal to the difference between the truck arrival time $att_i^f$ and the drone arrival time $atd_{i}^{fd}$; otherwise, it is 0. Inequalities \eqref{eq:lb_hovering} and \eqref{eq:lb_loading} can be used to limit the maximum time a drone is allowed to hover at node $i$ and the maximum time a drone can be recharged at customer node $i$. They also ensure that drone $d$ of tandem $f$ can only hover or be recharged at customer $i$ if truck $f$ visits customer $i$.

Constraints
\begin{align}
	&dtt_i^{f} \geq att_i^f + \tau_i^{\text{S,T}} \quad \forall ~ i \in N, f \in F \label{eq:departureTimeTandemTruck}\\
	&dtt_i^f \geq dtd_i^{fd} \quad \forall ~ i \in N, f \in F, d \in D\label{eq:departureTimeTandemDrone} \\		
	&dtt_i^f \leq att_i^f + \tau^{\text{MS}} \quad \forall ~ i \in C, f \in F \label{eq:maxStationaryTruck} \\
    &dtd_i^{fd} \geq atd_i^{fd} + htd_i^{fd} + ltd_i^{fd} + \tau^{\text{L}} \sum_{j \in C'} \sum_{k \in N_+} \sum_{v \in V} y_{ijk}^{fdv} \quad \forall ~ i \in N, f \in F, d \in D \label{eq:departureTimeDrone}
\end{align}
set the departure times of trucks and drones.
The earliest departure time of truck $f$ from node $i$ is determined by constraints \eqref{eq:departureTimeTandemTruck} and \eqref{eq:departureTimeTandemDrone}. In addition, constraints \eqref{eq:maxStationaryTruck} limit the maximum time a truck can remain stationary at a node.
The departure time of a drone is determined with constraints \eqref{eq:departureTimeDrone}. Drone $d$ must not depart from node $i$ before it has finished hovering and loading and is prepared for launch if it starts an operation at node $i$. Thus, all truck- and drone-related activities at a node must be completed before the truck can leave that node.

\subsubsection{Energy consumption of drones} \label{sec:constraintsEnergyConsumption}
The total energy consumption $tec^{fd}$ of drone $d$ belonging to tandem $f$ is determined by constraints
\begin{align}
&tec^{fd} = \sum_{i \in N_0} \sum_{j \in C} \sum_{k \in N_+} \sum_{v \in V} e_{ijk}^{v}  y_{ijk}^{fdv} + P^{\text{H}}(0) \sum_{i \in C} htd_{i}^{fd}   \quad \forall ~ f \in F, d \in D \label{eq:total_energy_consumption_use}
\end{align}
and is equal to the energy expended for all flights plus the energy expended for hovering. The residual energy of a drone is computed by constraints
\begin{align}
&\begin{multlined} 
r_k^{fd} \leq r_i^{fd} + ltd_i^{fd} P^{\text{C}} - \sum_{v \in V} \sum_{j \in C} e^v_{ijk} y_{ijk}^{fdv} - htd_k^{fd} P^{\text{H}}(0) + E \left(1 - \sum_{v \in V} \sum_{j \in C} y_{ijk}^{fdv}\right) \\ \forall ~ f \in F, d \in D, i \in N_0, k \in N_+ \label{eq:energyFlight}
\end{multlined} \\
&r_i^{fd} + ltd_i^{fd} P^{\text{C}} \leq E \quad \forall ~ i \in N_0, f \in F, d \in D \label{eq:energyDeparture}  \\
&r_j^{fd} \leq r_{i}^{fd} + ltd_i^{fd} P^{\text{C}} + \tau^{\text{T}}_{ij} P^{\text{C}} w_{ij}^{fd} + E \left( 1 - x_{ij}^f \right) \quad \forall ~ i \in N_0, j \in N_+, f \in F, d \in D. \label{eq:energyLoading}
\end{align}
Constraints \eqref{eq:energyFlight} restrict the residual energy of drone $d$ when reuniting with truck $f$ at node $k$ if a flight is performed between nodes $i$ and $k$. The residual energy at reunification is the residual energy at departure from node $i$ minus the expended energy. The residual energy at departure from node $i$ consists of the residual energy at arrival $r_i^{fd}$ plus the recharged energy while stationary at node $i$. The expended energy comprises the energy $e^v_{ijk}$ for performing operation $\left(i,j,k\right)^v$ and the energy needed for hovering at reunification node $k$. 
Constraints \eqref{eq:energyDeparture} limit the residual energy at departure to the maximum value $E$ since this is not guaranteed by constraints \eqref{eq:energyFlight}. 
Finally, constraints \eqref{eq:energyLoading} represent the reloading of drone $d$ while traveling on truck $f$ from node $i$ to node $j$. However, this is valid only if truck $f$ uses arc $\left(i,j\right)$.

\section{Model strengthening}
\subsection{Preprocessing}
\label{sec:Preprocessing}
\subsubsection{Elimination of dominated drone operations}
Identifying drone speeds that will never be used in an optimal solution for a flight $\left(i,j,k\right)$ and eliminating them can substantially reduce the number of possible drone operations. Consequently, the number of variables in our model is also reduced, thereby improving the performance. In general, we have to consider the different resources time and energy consumption. An operation is referred to as dominated by another operation for the same flight if is not beneficial with respect to time or to energy consumption.

\label{sec:elimination_drone_flights}
\paragraph{General dominance rules}
\begin{prop} \label{prop:generalRule}
For flight $\left(i,j,k\right) \in P$, operation $\left(i,j,k\right)^v$ is dominated by operation $\left(i,j,k\right)^{s}$ with $s > v$ and $v,s \in V$, if 
\begin{equation}
    e^{s}_{ijk} + \left(\tau^v_{ijk} - \tau^s_{ijk}\right) P^{\text{H}}(0) \leq e^v_{ijk}. \label{eq:general_rule}
\end{equation}
\end{prop}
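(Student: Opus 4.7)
The plan is a direct exchange argument. Take any feasible solution $\sigma$ in which drone $d$ of tandem $f$ executes operation $(i,j,k)^v$, and construct $\sigma'$ from $\sigma$ by replacing this operation with $(i,j,k)^s$, adjusting only the drone's arrival time, hovering time, and residual-energy variables at node $k$. Everything else---truck routes, launch time from $i$, all downstream drone operations---stays the same. I would then verify that $\sigma'$ is feasible and that its objective value in \eqref{eq:objfunction} is no larger than that of $\sigma$.

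The key observation is that $s>v$ gives $\tau^s_{ijk}<\tau^v_{ijk}$, so the drone in $\sigma'$ reaches $k$ exactly $\tau^v_{ijk}-\tau^s_{ijk}$ time units earlier than in $\sigma$. To keep the subsequent schedule intact I set $htd_k^{fd}$ in $\sigma'$ equal to its value in $\sigma$ plus this saved time, so that the quantity $atd_k^{fd}+htd_k^{fd}$, and hence the entire timeline after $k$, is unchanged. Therefore $\sum_{i,j}\delta^{\text{T}}_{ij}x^f_{ij}$ and $att_{c+1}^f$ are identical in $\sigma$ and $\sigma'$, and only the drone-energy term of the objective can possibly change.

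For the energy accounting, the contribution of the replaced leg plus its hovering at $k$ in $\sigma'$ is
\[
e^s_{ijk}+\bigl(htd_k^{fd}+\tau^v_{ijk}-\tau^s_{ijk}\bigr)P^{\text{H}}(0),
\]
while in $\sigma$ it is $e^v_{ijk}+htd_k^{fd}P^{\text{H}}(0)$. Hypothesis~\eqref{eq:general_rule} says exactly that the former is at most the latter, so $tec^{fd}$ does not grow. Constraints \eqref{eq:total_energy_consumption_use}--\eqref{eq:energyLoading} are then satisfied because the residual-energy trajectory of drone $d$ in $\sigma'$ weakly dominates that in $\sigma$ from node $k$ onwards and coincides with it everywhere else.

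The main obstacle I expect is handling two boundary situations cleanly: (i) the maximum-hover bound~\eqref{eq:lb_hovering} at $k$ might seem to be tightened by the added hovering, and (ii) if the drone already arrived after the truck in $\sigma$ (so $htd_k^{fd}=0$), extending the hover is unnecessary and $\sigma'$ actually improves the schedule. Both are resolved by splitting on whether $att_k^f\ge atd_k^{fd}$ in $\sigma$. In the truck-later case the extra hovering in $\sigma'$ is merely the waiting time the truck was going to impose on the drone anyway, so the hover bound is inherited from $\sigma$ and the algebra above gives the energy inequality directly. In the drone-later case no extra hovering is required: operation $(i,j,k)^s$ is executed with $htd_k^{fd}=0$, its energy alone is at most $e^v_{ijk}$ by~\eqref{eq:general_rule}, and the truck's wait at $k$ shortens, so downstream times only shift earlier. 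Combining the two cases yields feasibility of $\sigma'$ and the objective inequality, establishing the dominance.
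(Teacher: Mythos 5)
Your proof is, at its core, the same argument the paper gives: replace $(i,j,k)^v$ by the faster $(i,j,k)^{s}$, let the drone absorb the saved flight time $\tau^v_{ijk}-\tau^{s}_{ijk}$ as (at most that much) additional hovering at power $P^{\text{H}}(0)$ so that the rest of the schedule is untouched, and observe that \eqref{eq:general_rule} is precisely the statement that the resulting energy use does not exceed $e^v_{ijk}$. The paper phrases this informally (``the maximum energy consumption of operation $(i,j,k)^{s}$ to reach the same point in time as operation $(i,j,k)^v$''); you formalize it as an explicit exchange on a feasible solution and add the case split on whether the truck or the drone arrives first at $k$, which makes the paper's remark that the faster operation is ``at least as good with respect to time'' precise. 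That is a welcome tightening rather than a different route, and the energy accounting and the residual-energy comparison from $k$ onwards are correct.

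One caveat: your dismissal of the maximum-hover bound is not correct as written. In the truck-later case the hover time at $k$ in $\sigma'$ is $htd_k^{fd}+\bigl(\tau^v_{ijk}-\tau^{s}_{ijk}\bigr)$, strictly larger than in $\sigma$, so constraint \eqref{eq:lb_hovering} is not ``inherited'': if the hover time in $\sigma$ were within $\tau^v_{ijk}-\tau^{s}_{ijk}$ of $\tau^{\text{MH}}$, the exchange would be infeasible. The paper's own proof silently ignores \eqref{eq:lb_hovering} as well (the rule is only unconditionally safe when $\tau^{\text{MH}}$ is non-binding, as in the experiments where the elimination is validated against the unpreprocessed model), so this does not put you behind the paper; but you should either state that assumption explicitly or drop the claim that the bound carries over.
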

\begin{proof}
 Operation $\left(i,j,k\right)^{s}$ is faster than operation $\left(i,j,k\right)^v$ because $\tau^{s}_{ijk} < \tau^v_{ijk}$ if ${s} > v$. However, since the truck can arrive after the drone, we can state only that operation $\left(i,j,k\right)^{s}$ is always at least as good as operation $\left(i,j,k\right)^v$ with respect to time. 

The energy consumption of operation $\left(i,j,k\right)^v$ is $e^v_{ijk}$. Since the truck can arrive after the drone, we also have to consider the maximum additional hover time $\left(\tau^v_{ijk} - \tau^{s}_{ijk}\right)$ to reach the same point in time with operation $\left(i,j,k\right)^{s}$ as with operation $\left(i,j,k\right)^v$. After that, the energy expended by hovering is the same for both speeds. Therefore, the maximum energy consumption of operation $\left(i,j,k\right)^{s}$ to reach the same point in time as operation $\left(i,j,k\right)^v$ is 
\begin{equation}
e^{s}_{ijk} + \left(\tau^v_{ijk} - \tau^s_{ijk}\right) P^{\text{H}}(0). \label{eq:energyFaster}
\end{equation}
Thus, operation $\left(i,j,k\right)^{s}$ dominates operation $\left(i,j,k\right)^v$, if \eqref{eq:general_rule} is true.
\end{proof}

However, due to the drone power-usage models for flight and hover mode introduced in Section \ref{sec:energy_use_model}, it is unlikely that the dominance rule in Proposition~\ref{prop:generalRule} applies. This also highlights the trade-off between energy consumption and execution time. Nevertheless, we are able to construct two special cases to eliminate dominated operations. In the following, we first show how a slower operation can dominate a faster one. Secondly, we demonstrate the reverse case, where faster operations are superior to slower operations.

\paragraph{Elimination of operations with faster speeds}
\begin{prop}
For flight $\left(i,j,k\right) \in P$, operation $\left(i,j,k\right)^s$ is dominated by operation $\left(i,j,k\right)^v$ with $s > v$ and $v,s \in V$ if
\begin{equation}
	\tau^v_{ijk} \leq \tau^{\text{T}}_{ik} \land e^v_{ijk} \leq e^{s}_{ijk} + \left( \tau^v_{ijk} - \tau^{s}_{ijk} \right) \cdot 
	P^{\text{H}}\left( 0 \right). \label{eq:faster_rule}
\end{equation}
\end{prop}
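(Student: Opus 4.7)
The plan is to exploit the hypothesis $\tau^v_{ijk}\leq \tau^{\text{T}}_{ik}$ to argue that, for the slower operation, the truck is already the time bottleneck at the retrieval node $k$; since $s>v$ implies $\tau^s_{ijk}<\tau^v_{ijk}\leq \tau^{\text{T}}_{ik}$, the truck is also the bottleneck for the faster operation. Consequently, in any schedule, whether we install $(i,j,k)^v$ or $(i,j,k)^s$, the reunion at $k$ occurs at the same instant, namely the truck's arrival time. This neutralizes the only advantage the faster speed could offer, and the comparison reduces to total energy.

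Concretely, I would proceed as follows. First, rewrite the effective completion time of the operation at $k$ as $\max(\tau^{v}_{ijk},\tau^{\text{T}}_{ik})$ and $\max(\tau^{s}_{ijk},\tau^{\text{T}}_{ik})$; under the hypothesis both equal $\tau^{\text{T}}_{ik}$, so the departure time from $k$, and hence every downstream truck and drone event, is identical in the two schedules. Second, account for the drone's energy expenditure until the reunion. Since the drone must hover at $k$ from its own arrival until the truck arrives, the total energy up to the reunion is
\begin{equation}
E^v \;=\; e^{v}_{ijk}+\bigl(\tau^{\text{T}}_{ik}-\tau^{v}_{ijk}\bigr)P^{\text{H}}(0)
\quad\text{and}\quad
E^s \;=\; e^{s}_{ijk}+\bigl(\tau^{\text{T}}_{ik}-\tau^{s}_{ijk}\bigr)P^{\text{H}}(0).
\end{equation}
Third, subtract: $E^v\leq E^s$ is equivalent to
\begin{equation}
e^{v}_{ijk}\;\leq\; e^{s}_{ijk}+\bigl(\tau^{v}_{ijk}-\tau^{s}_{ijk}\bigr)P^{\text{H}}(0),
\end{equation}
which is exactly the second clause of \eqref{eq:faster_rule}. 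Hence operation $(i,j,k)^v$ consumes no more energy than $(i,j,k)^s$ while producing identical truck/drone arrival and departure times at $k$. Any feasibility condition (battery level, downstream recharging, hover-time cap, route duration $M$) is therefore preserved in the same or improved state.

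Finally, I would conclude with a short exchange argument: given any feasible solution using $(i,j,k)^s$, replacing it by $(i,j,k)^v$ yields a solution that is feasible and has objective value no larger, because the three cost terms in \eqref{eq:objfunction} depend only on truck distance (unchanged), truck driver time (unchanged, since $att^f_{c+1}$ is unaffected), and drone energy (which weakly decreases by the inequality above). Thus $(i,j,k)^s$ may be removed from $W$ without losing any optimal solution.

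The main obstacle I anticipate is bookkeeping around the hover-time cap $\tau^{\text{MH}}$: the slower operation requires a smaller hover window at $k$ than the faster one (since $\tau^{\text{T}}_{ik}-\tau^v_{ijk}\leq \tau^{\text{T}}_{ik}-\tau^s_{ijk}$), so this constraint is actually relaxed by the exchange rather than tightened, which I would note explicitly so the reader sees that no auxiliary assumption on $\tau^{\text{MH}}$ is needed. The subtle piece is justifying that no change to any other variable ($htd$, $ltd$, $r$, $w$) is required to preserve feasibility, which follows because all downstream truck events at $k$ and beyond occur at exactly the same times in both schedules.
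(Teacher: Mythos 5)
Your proposal is correct and follows essentially the same route as the paper: the hypothesis $\tau^v_{ijk}\leq\tau^{\text{T}}_{ik}$ makes the truck the time bottleneck for both speeds, so the reunion time is unchanged and the comparison reduces to energy, where the faster operation must additionally hover for $\tau^v_{ijk}-\tau^s_{ijk}$ to reach the same instant — exactly the paper's argument. Your explicit exchange step and the remark that the hover-time cap is relaxed rather than tightened are just slightly more detailed bookkeeping of the same idea.
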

\begin{proof}
As stated above, operation $\left(i,j,k\right)^s$ is always at least as good with respect to time as $\left(i,j,k\right)^v$. However, if $\tau^v_{ijk} \leq \tau^{\text{T}}_{ik}$, then there is no benefit in using the faster speed $s$. Thus, both operations are equal with respect to time. Now, we can eliminate operation $\left(i,j,k\right)^s$ if the energy consumption is at least as high as the energy consumption of the slower operation $\left(i,j,k\right)^v$. Analogous to the general case, the energy consumption of operation $\left(i,j,k\right)^v$ is $e^v_{ijk}$, and the energy consumption of operation $\left(i,j,k\right)^s$ to reach the same point in time as operation $\left(i,j,k\right)^v$ can be determined with \eqref{eq:energyFaster}. Thus, operation $\left(i,j,k\right)^s$ is dominated by operation $\left(i,j,k\right)^v$, if \eqref{eq:faster_rule} is true.
\end{proof}

\paragraph{Elimination of operations with slower speeds}
\begin{prop}
For flight $\left(i,j,k\right) \in P$, operation $\left(i,j,k\right)^v$ is dominated by operation $\left(i,j,k\right)^s$ with $s > v$ and $v,s \in V$ if
\begin{align}
 &\neg \exists l \in C, l \neq j \text{ s.t. } e^v_{ijk} + \max \left( \tau_{il}^{\text{T}} + \tau_l^{\text{S,T}} + \tau^{\text{T}}_{lk} - \tau^v_{ijk}, 0\right) \cdot P^{\text{H}}(0) \leq \epsilon E \label{eq:noDetourV} \\
\land &\neg \exists l \in C, l \neq j \text{ s.t. } e^s_{ijk} + \max \left( \tau^{\text{T}}_{il} + \tau_l^{\text{S,T}} + \tau^{\text{T}}_{lk} - \tau^s_{ijk}, 0\right) \cdot P^{\text{H}}(0) \leq \epsilon E \label{eq:noDetourS} \\
   \land &\tau^v_{ijk} > \tau^{\text{T}}_{ik} \land e^{s}_{ijk} + \max \left( \tau^{\text{T}}_{ik} - \tau^s_{ijk}, 0 \right) \cdot 
P^{\text{H}}\left( 0 \right) \leq e^v_{ijk} \label{eq:slower_rule}.
\end{align}
\end{prop}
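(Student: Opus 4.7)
The plan is to prove that $(i,j,k)^v$ is dominated by $(i,j,k)^s$ via a substitution argument. I start from any feasible VRPD--DSS solution that executes operation $(i,j,k)^v$, replace this operation by $(i,j,k)^s$ while keeping the truck's route and all other drone operations unchanged, and show that the resulting solution remains feasible and has no larger total cost. As in the previous two propositions, I split the reasoning into a time comparison and an energy comparison, but here the crucial additional ingredient is that neither speed permits the truck to detour via any other customer between $i$ and $k$.

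First, I use conditions \eqref{eq:noDetourV} and \eqref{eq:noDetourS} to show that the truck must travel directly from $i$ to $k$. Suppose, for contradiction, that the truck visits some customer $l \neq j$ between $i$ and $k$. Then the drone is airborne for at least $\tau^{\text{T}}_{il} + \tau_l^{\text{S,T}} + \tau^{\text{T}}_{lk}$ time units and must therefore hover at its retrieval node for at least $\max(\tau^{\text{T}}_{il} + \tau_l^{\text{S,T}} + \tau^{\text{T}}_{lk} - \tau^v_{ijk}, 0)$ time units. Its total energy consumption then matches the left-hand side of \eqref{eq:noDetourV}, which exceeds $\epsilon E$, contradicting the battery limit. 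Multi-customer detours are handled analogously: by a triangle-inequality argument on $\tau^{\text{T}}$ together with non-negativity of $\tau^{\text{S,T}}$, any such detour takes at least as long as one of its single-customer sub-detours, so it is ruled out as well. The same argument, this time invoking \eqref{eq:noDetourS}, ensures that after the substitution the truck would still have to travel directly.

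Since the truck travels directly, it arrives at $k$ at time $\tau^{\text{T}}_{ik}$ after departing from $i$. The first inequality in \eqref{eq:slower_rule} gives $\tau^v_{ijk} > \tau^{\text{T}}_{ik}$, so the drone at speed $v$ is the bottleneck; the tandem reunites at time $\tau^v_{ijk}$ and the drone expends exactly $e^v_{ijk}$ since no hovering at $k$ is required. After substituting $(i,j,k)^s$ for $(i,j,k)^v$, the tandem reunites at time $\max(\tau^s_{ijk}, \tau^{\text{T}}_{ik}) \leq \tau^v_{ijk}$, because both $\tau^s_{ijk} < \tau^v_{ijk}$ (from $s>v$) and $\tau^{\text{T}}_{ik} < \tau^v_{ijk}$. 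Hence the arrival time at $k$, and at every subsequent node, does not increase, bounding the labor-cost and fuel-cost terms of the objective. The drone's energy is now $e^s_{ijk} + \max(\tau^{\text{T}}_{ik} - \tau^s_{ijk}, 0) \cdot P^{\text{H}}(0)$, which by the second inequality in \eqref{eq:slower_rule} is at most $e^v_{ijk}$, so the energy-cost term does not increase and the battery constraint is respected.

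The main obstacle is the careful handling of detours: conditions \eqref{eq:noDetourV} and \eqref{eq:noDetourS} only check single-customer detours explicitly, so extending the impossibility to multi-customer detours requires either an explicit triangle inequality on $\tau^{\text{T}}$ or an equivalent monotonicity assumption that adding intermediate stops cannot shorten truck travel time. Once that is in place, the remainder reduces to algebraic bookkeeping of the inequalities gathered in \eqref{eq:noDetourV}--\eqref{eq:slower_rule}.
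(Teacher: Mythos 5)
Your proof is correct and follows essentially the same route as the paper's: conditions \eqref{eq:noDetourV} and \eqref{eq:noDetourS} force a direct truck trip from $i$ to $k$, so the hover time is known exactly; the condition $\tau^v_{ijk} > \tau^{\text{T}}_{ik}$ pins the reunion time at $\tau^v_{ijk}$ with zero hover (hence energy exactly $e^v_{ijk}$) for speed $v$; and the extra hover at speed $s$ is bounded by $\max\left(\tau^{\text{T}}_{ik}-\tau^s_{ijk},0\right)$, giving the energy comparison in \eqref{eq:slower_rule}. Your explicit handling of multi-customer detours via the triangle inequality on truck travel times addresses a point the paper leaves implicit (its conditions quantify only over single intermediate customers $l$), which is a worthwhile clarification rather than a departure from the argument.
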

\begin{proof}
Conditions \eqref{eq:noDetourV} and \eqref{eq:noDetourS} ensure that operations $\left(i,j,k\right)^v$ and $\left(i,j,k\right)^s$ require a direct trip of the truck from the launch node $i$ to the retrieval node $k$. Here, a direct trip is necessary if the truck cannot serve a customer $l \in C$ between $i$ and $k$ since this detour via $l$ would increase the hover time of the drone at $k$, resulting in energy consumption that is too high. Taking this special case into account, the amount of hover time is known as the truck travels directly from the launch to the retrieval node. Therefore, we are able to determine the expended energy, including hovering, exactly. In addition, we assume that the drone always arrives after the truck if operation $\left(i,j,k\right)^v$ is performed $\left( \tau^v_{ijk} > \tau^{\text{T}}_{ik} \right)$; hence, it never needs to hover. This also means that, in contrast to the general rule, if the drone performs operation $\left(i,j,k\right)^s$, it can always be retrieved by the truck before reaching the same point in time as operation $\left(i,j,k\right)^v$. Thus, the maximum additional hover time is $\max \left( \tau^{\text{T}}_{ik} - \tau^s_{ijk}, 0 \right)$ and operation $\left(i,j,k\right)^s$ dominates operation $\left(i,j,k\right)^v$ if conditions \eqref{eq:noDetourV} -- \eqref{eq:slower_rule} hold true.
\end{proof}

\subsubsection{Elimination of variables z}
 Different modeling approaches for the VRPD prohibit, in different ways, the launch of a drone when it is already in flight. We introduce variables $z_{lik}^{fd}$ to check whether drone $d$ of tandem $f$ performs a flight from $i$ to $k$ and is, therefore, not available at node $l$. This leads to a large number of variables for larger instances. However, we can eliminate several unnecessary variables to reduce the problem size without excluding any optimal solutions.
\begin{prop}
Variables $z_{lik}^{fd}$ for three pairwise different nodes $l \in C, i \in N_0, k \in N_+$ can be eliminated for all $f \in F, d \in D$ if there is no drone operation with launch node $i$ and retrieval node $k$ or 
\begin{align}
 \neg \exists \left(i,j,k\right)^v \in W_v, v \in V, l \neq j \text{ s.t. } e^v_{ijk} + \max \left( \tau^{\text{T}}_{il} + \tau_{l}^{\textrm{S,T}} + \tau^{\text{T}}_{lk} - \tau^v_{ijk}, 0\right) \cdot P^{\text{H}}(0) \leq \epsilon E. \label{eq:noDetourL}
\end{align}
\end{prop}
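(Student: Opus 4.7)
The plan is to show that, under either listed condition, constraint~\eqref{eq:droneInAir} never forces $z_{lik}^{fd} > 0$ in any feasible solution. Because $z_{lik}^{fd}$ appears only as a bounded-below quantity in~\eqref{eq:droneInAir} and with a non-positive coefficient on the right-hand side of~\eqref{eq:onlyOneActivity}, fixing it to zero preserves feasibility and hence optimality of every solution.

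The first case is immediate: if no feasible drone operation has launch node $i$ and retrieval node $k$, then $y_{ijk}^{fdv} = 0$ for all $j \in C, v \in V$ in every feasible solution, so the right-hand side of~\eqref{eq:droneInAir} reduces to $p_{il}^f + p_{lk}^f - 2 \leq 0$, and $z_{lik}^{fd} = 0$ is admissible.

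For the second case, assume~\eqref{eq:noDetourL} holds and suppose, for contradiction, that some feasible solution has $y_{ijk}^{fdv} = 1$ for some $(j, v)$ together with $p_{il}^f = p_{lk}^f = 1$. First I rule out $j = l$: if $j = l$, demand satisfaction~\eqref{eq:completedemand} gives $q_l^f = 0$, which together with~\eqref{eq:liftedMTZ} and~\eqref{eq:orderingtrucknodes1} forces $p_{il}^f = 0$, contradicting our assumption. Hence $j \neq l$ and~\eqref{eq:noDetourL} applies to this triple $(i,j,k)^v$. Since the truck's route from $i$ to $k$ passes through $l$ with service there, iterating~\eqref{eq:timetruckLB} and~\eqref{eq:departureTimeTandemTruck} along the subroute gives
\[
  att_k^f \;\geq\; dtt_i^f + \tau^{\text{T}}_{il} + \tau_l^{\text{S,T}} + \tau^{\text{T}}_{lk},
\]
where any additional intermediate stops only raise the bound. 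Combined with~\eqref{eq:departureTimeTandemDrone},~\eqref{eq:arrivalDroneCustomerUB}, and~\eqref{eq:hoverTimeDrone}, this yields
\[
  htd_k^{fd} \;\geq\; \max\left(\tau^{\text{T}}_{il} + \tau_l^{\text{S,T}} + \tau^{\text{T}}_{lk} - \tau^v_{ijk},\; 0\right).
\]
Substituting this together with the bound $r_i^{fd} + ltd_i^{fd} P^{\text{C}} \leq E$ from~\eqref{eq:energyDeparture} into~\eqref{eq:energyFlight} and using $r_k^{fd} \geq (1-\epsilon)E$ produces
\[
  e^v_{ijk} + \max\left(\tau^{\text{T}}_{il} + \tau_l^{\text{S,T}} + \tau^{\text{T}}_{lk} - \tau^v_{ijk},\; 0\right) \cdot P^{\text{H}}(0) \;\leq\; \epsilon E,
\]
in direct contradiction to~\eqref{eq:noDetourL}. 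Hence $p_{il}^f + p_{lk}^f \leq 1$ whenever $y_{ijk}^{fdv} = 1$, and~\eqref{eq:droneInAir} is again satisfied with $z_{lik}^{fd} = 0$.

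The main obstacle is the temporal chaining: the truck may visit further customers between $i$ and $l$ or between $l$ and $k$, and I must verify that every such stop only strengthens the lower bound on $att_k^f$ (routine, since travel and service times are nonnegative) and that any intermediate recharging cannot rescue the energy bound (already handled because the drone's energy at departure from $i$ is capped at $E$ by~\eqref{eq:energyDeparture}, independently of what happens before $i$). A secondary technicality is confirming the implication $q_l^f = 0 \Rightarrow p_{il}^f = 0$ from the MTZ-style precedence constraints; this follows from a short case analysis of~\eqref{eq:liftedMTZ} and~\eqref{eq:orderingtrucknodes1} using $x_{il}^f = x_{li}^f = 0$.
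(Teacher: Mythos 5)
Your proof is correct and rests on the same idea as the paper's: condition \eqref{eq:noDetourL} means that any truck detour via $l$ forces enough hovering at $k$ to push the drone's consumption beyond the available energy $\epsilon E$, so no feasible solution can combine a flight from $i$ to $k$ with a truck visit to $l$ in between, and $z_{lik}^{fd}=0$ is therefore always admissible. Your write-up is considerably more explicit than the paper's brief verbal argument --- in particular the formal chaining of the time constraints into a hover-time lower bound, the substitution into the energy constraints, and the separate treatment of the edge case $j=l$ --- but it is the same approach, not a different one.
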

\begin{proof}
Following the definition of the variables, it is obvious that $z_{lik}^{fd}$ can be eliminated if there is no drone operation with launch node $i$ and retrieval node $k$. Condition \eqref{eq:noDetourL} states that there is no feasible operation with launch node $i$ and retrieval node $k$ if the truck performs a detour via node $l$, since the energy consumption of the operation plus the additional energy consumption while hovering at node $k$ exceeds the drone's available energy. Thus, the truck cannot visit node $l$ between $i$ and $k$ if a drone performs any operation with launch node $i$ and retrieval node $k$ and variables $z_{lik}^{fd}$ can be eliminated.
\end{proof}
\subsection{Valid inequalities} \label{sec:valid_inequalities}
Most of the valid inequalities used in this paper are similar to the valid inequalities introduced in \cite{tamkef-2021}. Since they are explained in detail there, we refer to that work for a more detailed discussion.

\subsubsection{Lower bounds on arrival and departure times}\label{sec:description_lowerBounds}
The lower bounds on arrival times at nodes and departure times from nodes are modified in comparison to \cite{tamkef-2021} to include the additional aspects considered in this paper. However, the operating principle is similar. The following inequalities set lower bounds on the completion time of a truck $f$ and a drone $d$ belonging to $f$:
\begin{align}
&att_{c+1}^f \geq \sum_{i \in N_0} \sum_{j \in N_+} \left( \tau^{\text{T}}_{ij} + \tau_j^{\text{S,T}} \right) x_{ij}^f \quad \forall f \in F \label{eq:lbArrivalTimeTruck} \\
&\begin{multlined} 
atd_{c+1}^{fd} \geq \sum_{i \in N_0} \sum_{j \in C} \sum_{k \in N_+} \sum_{v \in V} \left( \tau^{\text{L}} + \tau^{v}_{ijk} \right) y_{ijk}^{fdv} + \sum_{i \in N} \left[ htd_i^{fd} + ltd_i^{fd} \right] + \sum_{i \in N_0} \sum_{j \in N_+} w_{ij}^{fd} \tau^{\text{T}}_{ij} \\
\quad \forall f \in F, d \in D \label{eq:lbArrivalTimeDrone}
\end{multlined}\\
&att_{c+1}^f \geq dtt_i^f + \sum_{k \in N_+}\left[ \left( \tau^{\text{T}}_{ik}  + \tau_k^{\text{S,T}} + \tau^{\text{T}}_{k,c+1} \right) x_{ik}^f \right] \quad \forall ~ i \in C, f \in F \label{eq:lbCompletionTimeTandem}.
\end{align}
 The first two consider the total active time of the vehicles. The active time of a truck consists of travel and service times \eqref{eq:lbArrivalTimeTruck}. In addition to these, hovering and recharging times at nodes and on arcs must also be taken into consideration for drones \eqref{eq:lbArrivalTimeDrone}. In contrast to the first two inequalities, inequalities \eqref{eq:lbCompletionTimeTandem} determine the completion time based on the minimum travel time from a customer $i$ via another node $k$ back to the depot. If truck $f$ travels directly from $i$ to $k$, then the earliest arrival time at the depot is the departure time at node $i$, plus the travel time from $i$ to $k$, the service time at node $k$, and the travel time from node $k$ to the depot.
\begin{align}
&att_{k}^f \geq \sum_{i \in N_0} \left( \tau^{\text{T}}_{0,i} + \tau_i^{\text{S,T}} + \tau^{\text{T}}_{ik} \right) x_{ik}^f \quad \forall k \in C, f \in F \label{eq:lbDepartureTimeTandemNodeTruck}\\ 
&dtd_{k}^{fd} \geq \sum_{i \in N_0} \sum_{j \in C} \sum_{v \in V} \left(\tau^{\text{T}}_{0,i} + \tau^{\text{L}} + \tau^{v}_{ijk} \right) y_{ijk}^{fdv} + htd_{k}^{fd} + ltd_{k}^{fd} \quad \forall k \in C, f \in F, d \in D. \label{eq:lbDepartureTimeTandemNodeDrone}
\end{align}
Inequalities \eqref{eq:lbDepartureTimeTandemNodeTruck} establish lower bounds on the arrival time at a customer $k$. As in \eqref{eq:lbCompletionTimeTandem}, the detour via another node is considered, but now the truck starts at the depot and travels directly to detour node $i$. Inequalities \eqref{eq:lbDepartureTimeTandemNodeDrone} set lower bounds on the departure time of drone $d$ associated with truck $f$ at node $k$. Drone $d$ travels atop truck $f$ from the depot to detour node $i$ and performs an operation with retrieval node $k$

\subsubsection{Problem-specific cuts} \label{sec:description_cuts}
In addition to the lower bounds on arrival and departure times, we use the VRPD-specific cuts introduced in \cite{tamkef-2021}:
\begin{align}
&\sum_{i \in C} \sum_{f \in F} q_i^f \geq  \frac{|C| - |D| \cdot |F|}{|D| + 1} \label{eq:lb_truck_visits} \\
&x_{ik}^f \leq p_{ik}^f \quad \forall i \in N_0, k \in N_+, f \in F \label{eq:precedenceCutsTruck}\\
&\sum_{j \in C} \sum_{v \in V} y_{ijk}^{fdv} \leq p_{ik}^f \quad \forall i \in N_0, k \in N_+, f \in F, d \in D \label{eq:precedenceCutsDrone} \\
&x_{0c+1}^f + q_j^f \leq 1 \quad \forall j \in C f \in F \label{eq:noArtificialTrip}.
\end{align}
First, we set a lower bound on the number of customers that can be visited by all trucks with inequality \eqref{eq:lb_truck_visits}. Inequalities \eqref{eq:precedenceCutsTruck} state that, if truck $f$ travels directly from node $i$ to node $k$, then $i$ has to precede $k$ in the route of $f$. Inequalities \eqref{eq:precedenceCutsDrone} ensure that, if drone $d$ performs any flight with launch node $i$ and retrieval node $k$, then $i$ must be visited before $k$ by truck $f$. Inequalities \eqref{eq:noArtificialTrip} prohibit the artificial trip between the two depot nodes $0$ and $c+1$ if any customer is visited by the truck.

Furthermore, we use the extended subtour elimination constraints (ESECs)
\begin{align}
\begin{multlined}
\sum_{f \in F} \left[ \sum_{i \in S} \sum_{j \in S} x_{ij}^f  + \sum_{i \in \bar{S}} \sum_{j \in S} \sum_{k \in S} \sum_{d \in D} \sum_{v \in V} y_{ijk}^{fdv} \right. 
\\ \left. + \sum_{i \in S} \sum_{j \in S} \sum_{k \in \bar{S}} \sum_{d \in D} \sum_{v \in V} y_{ijk}^{fdv} + \sum_{i \in S} \sum_{j \in S} \sum_{k \in S} \sum_{d \in D} \sum_{v \in V} y_{ijk}^{fdv} \right] \leq |S| - 1 \quad \forall ~ S \subseteq C, |C| \geq 2 \label{eq:ESEC}
\end{multlined}
\end{align}
introduced in \cite{tamkef-2021} as well. Since there is an exponential number of ESECs, we cannot add them at the beginning but, rather, have to detect violated cuts during the optimization. Therefore, we use the separation algorithm presented in \cite{tamkef-2021}.

Finally, we introduce the following new cuts, which have been proven to be useful:
\begin{align}
&\sum_{j \in C} \sum_{v \in V} y_{ijk}^{fdv} \leq \sum_{l \in C} z_{lik}^{fd} + x_{ik}^f \quad \forall ~ i \in N_0, k \in N_+, f \in F, d \in D  \label{eq:visit_during_flight} \\
&r_i^{fd} + ltd_i^{fd} P^{\text{C}} - \left(1 - \epsilon \right) E \geq \sum_{v \in V} \sum_{j \in \bar{C}} \sum_{k \in N_+}  e^v_{ijk} y_{ijk}^{fdv} \quad \forall ~ i \in C, f \in F, d \in D \label{eq:lb_available_energy} \\
&tec^{fd} =  \sum_{i \in N_0} \sum_{j \in N_+} w_{ij}^{fd} \tau^{\text{T}}_{ij} P^{\text{C}} + 
\sum_{i \in C} ltd_{i}^{fd} P^{\text{C}} + \left(r_0^{fd} - r_{c+1}^{fd} \right) \quad \forall ~ f \in F, d \in D \label{eq:total_energy_consumption_recharge}.
\end{align}
Inequalities \eqref{eq:visit_during_flight} state that, if drone $d$ of tandem $f$ performs any operation with launch node $i$ and retrieval node $k$, then it has to be in the air at any node $l$ visited by truck $f$ between $i$ and $k$, or truck $f$ has to travel directly from $i$ to $k$. Inequalities \eqref{eq:lb_available_energy} set a lower bound on the available energy, consisting of the residual energy on arrival and the recharged energy while stationary, at node $i$. The available energy must be sufficient for a drone operation starting at node $i$. Finally, equations \eqref{eq:total_energy_consumption_recharge} represent a second variant to determine the total energy consumption of a drone. Constraints \eqref{eq:total_energy_consumption_use} take into account the energy used for flying and hovering. In contrast, equations \eqref{eq:total_energy_consumption_recharge} consider the energy that is used to recharge the battery of a drone. However, the battery need not be fully charged at the end of the tour. Therefore, we have to additionally consider the difference between the residual energy at the beginning and at the end to determine the drone's total energy consumption.

\section{Computational studies} \label{sec:ComputationalStudies}
The algorithm is implemented in C{\#} with .NET Framework 4.6.1 and Gurobi 9.0 is used as the MILP solver. All tests are performed on a Windows Server 2012 R2 with Intel(R) Xeon(R) CPU E5-4627 v2 \@ 3.3 GHz processors with 32 cores and 768GB RAM. We use 12 cores to solve each instance, and the memory consumption is very low. As in \cite{tamkef-2021}, extended subtour elimination constraints \eqref{eq:ESEC} are not added at every node of the branch-and-bound tree. Here, they are added at every 100th node.
\subsection{Generation of real-world rural-area test instances}
We generate test instances that represent a real-world, rural-area based scenario for the use of truck-drone tandems to test our approach and gain managerial insights. All instances are created with Python 3.7 and are available at \cite{tamkef-2021b}. 
\paragraph{Depot and customer locations}
The basis of all test instances is an rectangular-shaped area approximately 20\,km by 30\,km located in Minnehaha County, South Dakota, USA. We have selected approximately 700 possible customer locations and a UPS Customer Center in Sioux Falls as the depot. The map in Figure~\ref{fig:map_customers} shows the distribution of the selected customer locations (dots) and the depot (triangle). To create a single instance, we randomly select $|C|$ customers out of all customer locations.
\begin{figure}[ht]
	\centering
	\includegraphics[scale = 0.75]{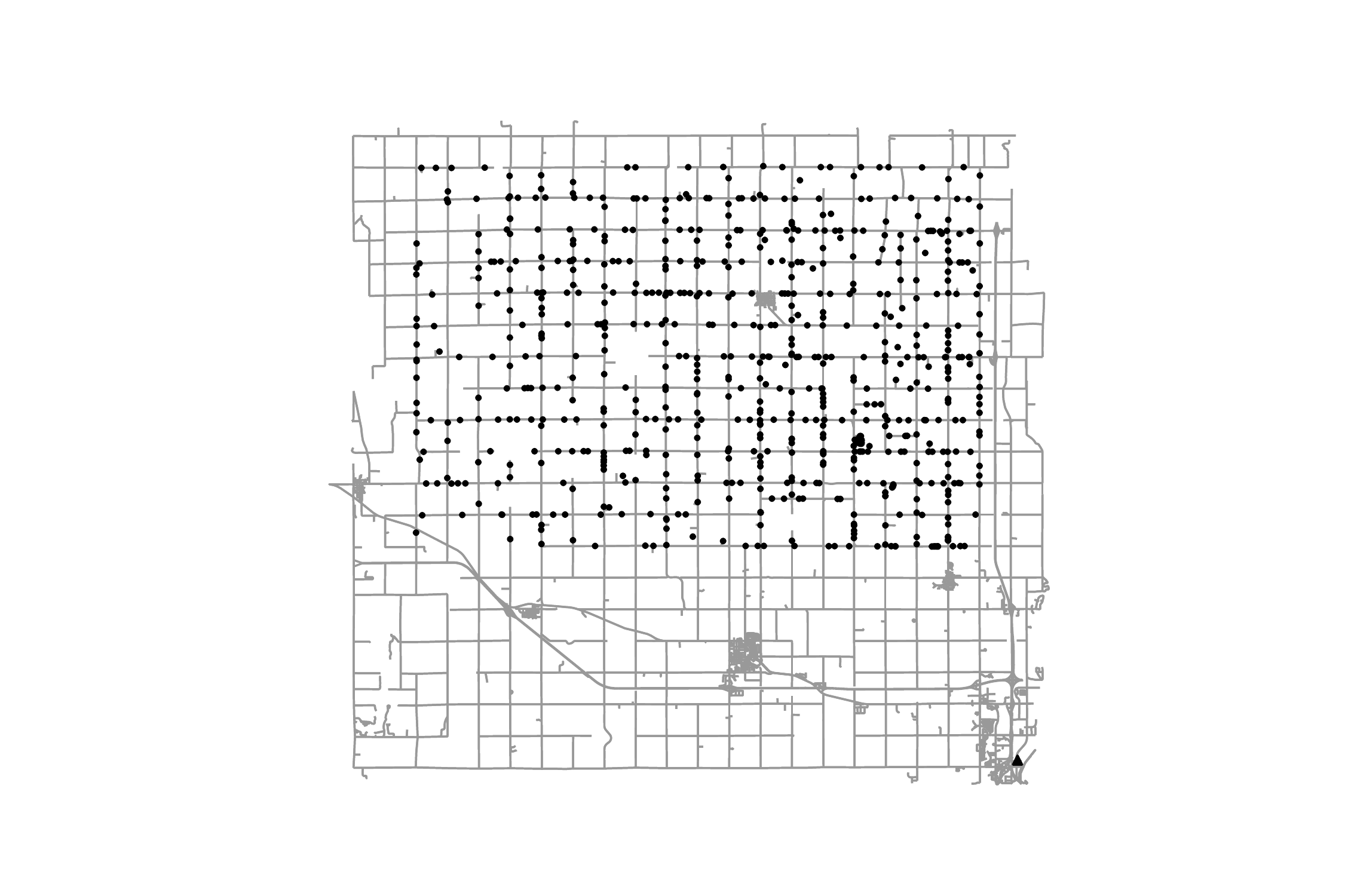}
	\caption{Distribution of all possible customers (dots) and the depot (triangle in the lower right corner)}
	\label{fig:map_customers}
\end{figure}
\paragraph{Specifications of drone model and battery}
As in the examples in Section~\ref{sec:energy_use_model}, we use the octocopter model presented in \cite{stolaroffj-2018}. We assume an overall power efficiency of the drone $\eta = 0.7$ and a safety coefficient $\sigma = 0.2$. Similar to \cite{sacramentod-2019}, we assume that all drones can transport packages weighing up to 5\,kg. 
In addition, we use an existing lithium polymer (LiPo) battery from Grepow Inc. to power the drone \cite{grepow-2019}. Since large drones have higher energy consumption, they also require large batteries. The LiPo battery selected for our experiments weighs six kilograms and consists of 12 cells with a total nominal voltage of \SI{44.4}{\V} and a nominal capacity of \SI{22000}{\milli\A\hour}. Thus, it has a nominal energy of $E = $ \SI{976.8}{\W\hour}$ = $ \SI{3516.48}{\kilo\J}. In addition, we set the maximum DoD $\epsilon = 0.80$. Finally, we assume a charge rate of 1\,C, which means that the battery can be completely charged in one hour and $P^{\text{C}} = $ \SI{3516.48}{\kilo\J\per\hour}.

\paragraph{Selection of drone customers}
A customer $j \in C$ can only be supplied by a drone if the mass of the package $m_j$ is lass than the payload of the drone. We assume that 90\% of all packages are below the drone's payload and range from \SI{0.05}{\kg} to \SI{5}{\kg}. The other 10\% range from \SI{5}{\kg} to \SI{50}{\kg}. Hence, with probability $p \in [0, 1)$, we draw $m_j$ from interval $[0.05, 5]$ if $p \leq 0.9$ and from interval $(5, 50)$ otherwise.
However, a package may not be eligible for drone delivery even though its weight is below the payload. This can occur, for example, if a customer is not willing to be supplied by a drone, which may well be the case when a new technology is introduced. In our computational studies, we assume that 75\% of all customers allow drone delivery of their packages.

\paragraph{Distances, travel times, and time parameters}
We use openrouteservice.org \cite{openrouteservice-2019} to obtain actual road network distances and travel times between all locations. The beeline distances for drone flights are determined with GeoPy \cite{geopy-2019}. Both are free-to-use Python packages. The maximum route duration $M$ is eight hours. For each customer $j \in C$, we set the service time of a truck delivery $\tau_j^{\text{S,T}}$ at 120 seconds and the service time of drone delivery $\tau_j^{\text{S,D}}$ at 90 seconds. For depot nodes 0 and $c+1$ the times are fixed at zero. The time needed to prepare the launch of a drone $\tau^{\text{L}}$ is 60 seconds. Unless otherwise noted, the maximum time a truck is allowed to remain stationary at a node  $\tau^{\text{MS}}$ and the maximum time a drone is allowed to hover at retrieval  $\tau_j^{\text{MH}}$ are set high enough that they are not constraining.

\paragraph{Costs} We consider fuel costs of the trucks, wages of the truck drivers, and energy costs of the drones as described in the objective function \eqref{eq:objfunction}. These costs differ between different truck-types, regions, and companies and vary over time. The costs per distance unit traveled by truck $\lambda$ in our experiments are based on a typical P70 UPS truck. We assume a fuel consumption of 11\,mpg (\SI{0.214}{\l\per\km}) for rural areas \cite{lammertm-2012} and a diesel price of \SI{0.76}[\$]{\per\l}. Thus, distance cost parameter $\lambda$ is approximately \SI{0.16}[\$]{\per\km}. Furthermore, we assume that a driver costs approximately $\beta = \,$\SI{20}[\$]{\per\hour} and the electricity rate $\gamma =\,$ \SI[sticky-per]{0.09}[\$]{\per\kWh} (\SI{0.025}[\$]{\per\kilo\J}).

\subsection{Results for small instances}
We use 10 small instances with 20 customers to assess the following:
\begin{compactenum}[1)]
	\item the impact of our preprocessing methods on the runtime,
	\item the influence of varying drone speeds on the costs in the VRPD, and
	\item the benefits of speed selection in comparison to a single fixed speed.
\end{compactenum}  
We have chosen five possible drone speeds ranging from \SI{8}{\m\per\s} to \SI{16}{\m\per\s} in steps of \SI{2}{\m\per\s}. Therefore, we have five VRPDs with $|V| = 1$ and one VRPD-DSS with $V = \{ 8,10,12,14,16 \}$.
Table~\ref{tab:characteristics_20customers} shows the characteristics of each of the 20 customer instances. 
\begin{table}
	\centering
	\footnotesize
	\begin{tabular}{lrrrrrrrrr}
		\toprule
		Instance & $|\bar{C}|$ & \multicolumn{5}{c}{$|W|$ for  $|V| = 1$} & \multicolumn{3}{c}{$|W|$ for $|V| = 5$} \\
		 \cmidrule(lr{0.5em}){3-7}  \cmidrule(lr{0.5em}){8-10}  
		& & 8 & 10 & 12 & 14 & 16 & No OE & OE & $\Delta$W\,[\%] \\
		\midrule
        \texttt{SF\_20\_1}& 13 & 51&80&86&59&29&305&224&-26.56\\
        \texttt{SF\_20\_2}& 15 & 52&81&81&65&37&316&223&-29.43\\
        \texttt{SF\_20\_3}& 11 & 299&397&384&298&198&1576&1249&-20.75\\
        \texttt{SF\_20\_4}& 15 & 78&122&129&99&69&497&409&-17.71\\
        \texttt{SF\_20\_5}& 15 & 176&248&254&207&147&1032&909&-11.92\\
        \texttt{SF\_20\_6}& 12 & 84&134&135&96&54&503&355&-29.42\\
        \texttt{SF\_20\_7}& 14 & 101&173&170&110&61&615&489&-20.49\\
        \texttt{SF\_20\_8}& 12 & 81&131&123&91&46&472&332&-29.66\\
        \texttt{SF\_20\_9}& 13 & 83&131&125&94&70&503&376&-25.25\\
        \texttt{SF\_20\_10}& 14 & 129&208&195&133&74&739&593&-19.76\\
        \midrule
        Avg.& 13.30 & 113.40&170.50&168.20&125.20&78.50&655.80&515.90&-23.10\\
		\bottomrule
	\end{tabular}
	\caption{Characteristics of instances with 20 customers}
	\label{tab:characteristics_20customers}
\end{table}
It includes the number of customers that are available for drone deliveries $|\bar{C}|$ and the number of operations $|W|$ for each VRPD with $|V| = 1$ and for the VRPD-DSS with $|V| = 5$. For the VRPD-DSS, we also show the number of operations without the elimination of dominated drone operations (No OE), with operation elimination (OE), and the percentage of dominated operations that can be eliminated ($\Delta$W).

\subsubsection{Performance improvements through preprocessing}
The average results of all 10 instances for each set of speeds are presented in Table~\ref{tab:results_small_instances}. We solve each instance with two algorithm configurations. First, we apply only the model plus the cuts introduced in Section~\ref{sec:valid_inequalities} (Model + Cuts). The second configuration includes our preprocessing methods (PP + Model + Cuts). We perform five runs per instance to deal with the performance variability in the MILP solution process. The number of nonzero matrix elements ({\#}NZ) following presolve performed by Gurobi is used to represent the size of a problem. In addition, we use the run-time to optimality in seconds (Time) as the performance indicator and show the optimal costs (Costs). Finally, the relative change $\Delta$ between the two configurations for {\#}NZ and Time is displayed as a percentage. 
\begin{table}[ht]
	\centering
	\footnotesize
	\begin{tabular}{rrrrr|rrr|rr}
		\toprule
		$|D|$ & $V$ & \multicolumn{3}{c}{Model + Cuts} & \multicolumn{3}{c}{PP + Model + Cuts} & \multicolumn{2}{c}{$\Delta$[\%]}\\
		\cmidrule(lr{0.5em}){3-5} \cmidrule(lr{0.5em}){6-8} \cmidrule(lr{0.5em}){9-10}
		& & {\#}NZ &  Time\,[s] & Costs\,[\$] & {\#}NZ & Time\,[s] & Costs\,[\$] & {\#}NZ & Time \\
		\midrule
        1&8&45731.68&536.79&107.11&24979.10&174.12&107.11&-45.38&-67.56\\
        &10&47185.30&589.10&104.09&26646.70&144.95&104.09&-43.53&-75.39\\
        &12&49132.62&589.61&102.39&26061.90&169.71&102.39&-46.96&-71.22\\
        &14&43792.16&156.12&103.33&24289.40&39.43&103.33&-44.53&-74.74\\
        &16&35659.88&44.24&104.44&22263.70&11.69&104.44&-37.57&-73.57\\
        \cmidrule{2-10}
        &8,10,12,14,16&65508.24&589.35&101.80&32409.70&211.42&101.80&-50.53&-64.13\\
        \midrule
        2&8&65352.20&315.40&104.10&34540.30&81.62&104.10&-47.15&-74.12\\
        &10&89139.50&595.93&99.96&37114.60&145.78&99.96&-58.36&-75.54\\
        &12&86646.60&276.35&97.70&36017.00&56.02&97.70&-58.43&-79.73\\
        &14&71697.10&91.34&99.07&32967.90&22.92&99.07&-54.02&-74.91\\
        &16&51544.70&28.33&100.74&29663.60&7.76&100.74&-42.45&-72.61\\
        \cmidrule{2-10}
        &8,10,12,14,16&131450.60&698.27&96.97&48416.50&332.00&96.97&-63.17&-52.45\\
		\bottomrule
	\end{tabular}
	\caption{Average results for different drone speeds for instances with 20 customers}
	\label{tab:results_small_instances}
\end{table}

The results show that the number of nonzero elements in the constraint matrix can be reduced significantly by the preprocessing steps. However, optimal solutions are not excluded since costs are the same with and without preprocessing for all instances. The problem size reduction is larger for the VRPD-DSS than for a single-speed VRPD. In the VRPD, only unnecessary variables $z$ can be eliminated, while in the case of the VRPD-DSS, dominated drone operations are also removed. On average, 23.10\,\% of the drone operations are removed by applying our dominance rules, as shown in Table~\ref{tab:characteristics_20customers}.
Table~\ref{tab:results_small_instances} also highlights that a reduced problem size leads to significantly faster run times. However, in contrast to the problem size reductions, the run-time reductions are smaller for the VRPD-DSS than for the single VRPDs. These results demonstrate that our preprocessing methods introduced in Section~\ref{sec:Preprocessing} are highly effective for the considered test instances and therefore, will be used in all further tests.

\subsubsection{Impact of different drone speeds for the VRPD}
Table~\ref{tab:characteristics_20customers} clearly shows that the number of feasible drone operations is heavily dependent on the selected speed. The number of operations first increases with drone speed and, then, it decreases again. Thus, flying faster than a certain threshold reduces the range of a drone due to the nonlinear energy-consumption function (see Figure~\ref{fig:powerAndEnergB}). For the chosen drone model, using a speed of \SI{10}{\metre\per\second} leads to the largest number of operations on average. However, in some instances, a speed of \SI{12}{\metre\per\second} generates the most feasible operations.

Nevertheless, the results in Table~\ref{tab:results_small_instances} demonstrate that more feasible operations do not necessarily lead to lower costs. On average across all 10 instances, the lowest costs considering a single-speed problem can be obtained for both one and two drones with a drone speed of \SI{12}{\metre\per\second}. A faster drone is, therefore, not necessarily advantageous and can result in higher costs. This supports the findings in \cite{rajr-2020}. 
Although using a speed of \SI{12}{\metre\per\second} leads to the lowest costs on average, this does not apply to each individual instance. Figure~\ref{fig:costs_instance_speed} shows the costs for each speed for all 20 customer instances separated by the number of available drones. In some instances, there is almost no difference between two or more speeds (e.g., \texttt{SF\_20\_5}), whereas in other instances, the difference between best and second-best speed is fairly high (e.g., \texttt{SF\_20\_6}). Moreover, the speed that results in the lowest costs for an instance can depend on the number of drones available. For example, with instance \texttt{SF\_20\_4}, \SI{16}{\metre\per\second} leads to minimal cost with one drone, while with two drones, \SI{12}{\metre\per\second} is the best choice. In general, it can be summarily stated that the speed selected in advance for the VRPD can have a significant impact on the costs. To address this issue, drone speed should be included in the decision-making process, such as in the VRPD-DSS. In addition, although performing all flights at speed of \SI{16}{\metre\per\second} leads to substantially fewer operations than a speed of \SI{8}{\metre\per\second}, the average costs are lower. This further illustrates the trade-off between the ability to serve more customers with drones by flying slower, and the savings that can be achieved through shorter delivery times by flying faster.  

\begin{figure}[ht]
	\input{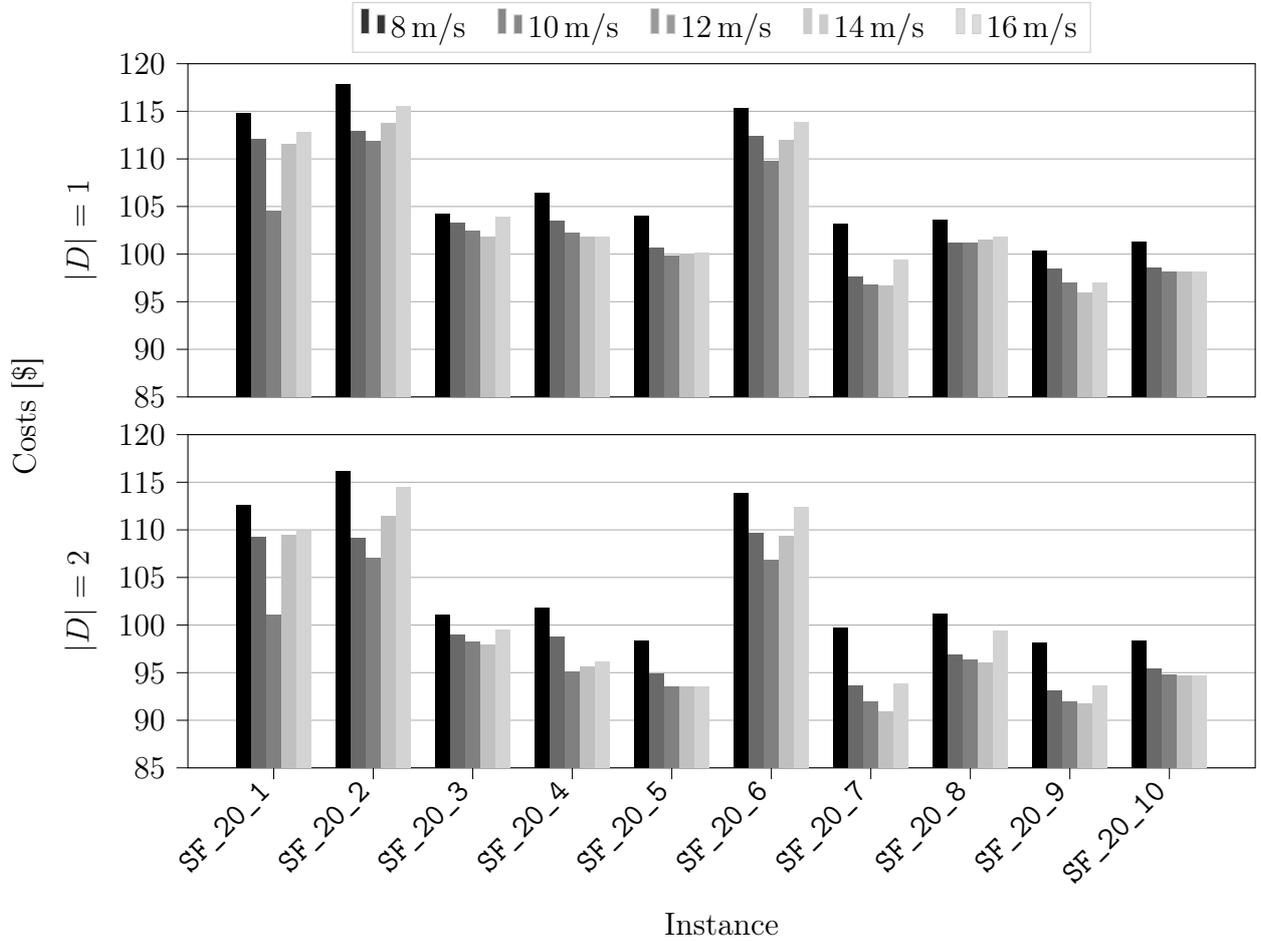}
	\caption{Costs for each 20 customer instance and varying speeds}
	\label{fig:costs_instance_speed}
\end{figure}

\subsubsection{VRPD vs. VRPD-DSS}
The optimal solution of the VRPD-DSS is always at least as good as the best solution of all VRPDs with a single speed. In addition, the costs of the VRPD-DSS are often lower because using different speeds is beneficial in terms of energy consumption or delivery time. However, the cost deviations between the VRPD and VRPD-DSS vary. Table~\ref{tab:SVP_vs_VSP} shows the minimum, average, and maximum percentage costs deviation ($\Delta$Costs) of all 10 instances for each VRPD compared to the VRPD-DSS. In addition, the total number of operations ({\#}OP) at each speed used in all 10 optimal solutions of the VRPD-DSS is given.

\begin{table}[ht]
	\centering
	\footnotesize
	\begin{tabular}{rrrrr|rrrr}
		\toprule
		$V$ & \multicolumn{4}{c}{$|D| = 1$} & \multicolumn{4}{c}{$|D| = 2$} \\
		\cmidrule(lr{0.5em}){2-5} \cmidrule(lr{0.5em}){6-9}
		& \multicolumn{3}{c}{$\Delta$\,Costs\,[\%]} & & \multicolumn{3}{c}{$\Delta$\,Costs\,[\%]} & \\
		\cmidrule(lr{0.5em}){2-4} \cmidrule(lr{0.5em}){6-8}
		& Min &  Avg & Max & {\#}OP &  Min &  Avg & Max & {\#}OP \\
		\midrule
		 8 & 2.46 & 5.20 & 10.60 & 0 & 3.45 & 7.34 & 11.71 & 0\\
		10 & 0.35 & 2.23 & 8.01 & 6 & 1.32 & 3.07 & 8.37 & 10\\
		12 & 0.01 & 0.59 & 1.78 & 22 & 0.00 & 0.78 & 1.89 & 29\\
		14 & 0.04 & 1.46 & 7.45 & 8 & 0.01 & 2.08 & 8.57 & 8 \\
		16 & 0.34 & 2.54 & 8.68 & 0 & 0.04 & 3.79 & 9.00 & 7\\
		\bottomrule
	\end{tabular}
	\caption{Comparison of solutions with a single speed and solutions with speed selection for 20 customers}
	\label{tab:SVP_vs_VSP}
\end{table}

Of course, the average costs of the VRPD with speed \SI{12}{\metre\per\second} deviate the least from the optimal solution of the VRPD-DSS since it has the lowest costs, on average, of all VRPDs. They are, on average, 0.59\,\% (one drone) and 0.78\,\% (two drones) worse than the optimal costs of the VRPD-DSS. For two instances in the case of a single drone and for one instance when two drones are available, the costs are the same, i.e., all flights are performed at \SI{12}{\metre\per\second} although other speeds are available. In contrast, the deviation is over 1\,\% in some instances. All other speeds lead to higher cost deviations on average and in the best and worst cases.

As a result, most flights in the speed selection problem are performed at a speed of \SI{12}{\metre\per\second} for both one and two drones. The slowest speed, \SI{8}{\metre\per\second}, is never used in any solution, and the fastest speed, \SI{16}{\metre\per\second}, is never selected if the tandem has only one drone. We also observe that the deviations between the VRPDs and the VRPD-DSS are larger with two drones. Thus, it is especially important to consider different drone speeds when multiple drones are available.

\subsection{Results for larger instances}
In our further studies, we use larger instances with 30, 40, and 50 customers to gain insights into the benefits of truck-drone tandems under the realistic circumstances presented in this paper. In contrast to the small instances, we limit the maximum time a truck can stop at a node to $\tau^{\text{MS}}$ four minutes, which is twice the service time for a customer visited by a truck. Moreover, the maximum hover time at a retrieval node $\tau^{\text{MH}}$ is restricted to two minutes. Preliminary tests on 20 customer instances have shown that these values improve computational performance compared to the unrestricted case but increase costs only slightly. 

\subsubsection{The MILP solver as a heuristic}
We focus on using the MILP solver as a heuristic rather than as an exact approach in the tests with larger instances. Today, state-of-the-art solvers contain powerful primal heuristics to find good feasible solutions quickly \cite{berthold-2013}. To test Gurobi's ability to provide good solutions quickly, we conduct two different experiments. 

In the first experiment (Experiment 1), we set the Gurobi parameter \textit{MIPFocus} to 1 and \textit{Heuristics} to 0.75. The former modifies the high-level solution strategy to focus on finding good feasible solutions, while the latter lets Gurobi spend even more time on primal heuristics. Using this setting, we perform five runs with different seed values for each instance and limit the maximum time per run to one hour. In the second experiment (Experiment 2), we attempt to achieve a good lower bound. For this purpose, we use the default values of \textit{MIPFocus} and \textit{Heuristics} and provide the best found solution in the first experiment as the starting solution. In addition, we increase the maximum run time to eight hours but perform only a single run per instance. Detailed results for both experiments are shown in Table~\ref{tab:detailed_results_solver} in the appendix.

Table~\ref{tab:results} displays the average over all 10 instances per instance class of: the average objective function value at termination over all five runs of Experiment 1 ($\overline{\text{Obj}}$); the coefficient of variation (CV), i.e., the ratio of the standard deviation to the mean, of the objective function value as a percentage; the objective function value of the best known solution (BKS); the optimality gap of BKS at termination (Gap) as a percentage; and the relative percentage deviation $\overline{\text{RPD}} = \left(\overline{\text{Obj}} - \text{BKS} \right)/\text{BKS} \cdot 100$ at different points in time. Note that BKS corresponds to the objective function value at the termination of Experiment~2.
\begin{table}
    \small
    \centering
    \begin{tabular}{llrrrrrrrrrr}
    \toprule
         $|C|$ & $|D|$ & $\overline{\text{Obj}}$ & CV\,[\%] & BKS & Gap\,[\%] & \multicolumn{6}{c}{$\overline{\text{RPD}}$\,[\%]}\\
                        \cmidrule{7-12}
               &       &    &   &   & & \SI{60}{\second} & \SI{120}{\second} & \SI{300}{\second} & \SI{600}{\second} & \SI{1800}{\second} & \SI{3600}{\second} \\
    \midrule
        30 & 1 & 116.39 & 0.06 & 116.23 & 2.64 & 0.47 & 0.33 & 0.27 & 0.13 & 0.13 & 0.13 \\
           & 2 & 108.81 & 0.27 & 108.44 & 4.09 & 2.39 & 1.49 & 0.93 & 0.71 & 0.47 & 0.33\\
        \midrule
        40 & 1 & 129.23 & 0.25 & 128.95 & 7.43 & 1.75 & 1.20 & 0.85 & 0.61 & 0.28 & 0.22  \\
           & 2 & 119.30 & 0.37 & 118.79 & 12.06 & 6.97 & 2.92 & 1.23 & 0.97 & 0.58 & 0.43 \\
        \midrule
        50 & 1 & 149.01 & 0.50 & 148.29 & 9.30 & 7.18 & 4.18 & 2.07 & 1.55 & 0.76 & 0.49  \\
           & 2 & 138.08 & 1.00 & 136.04 & 15.37 & 21.95 & 15.11 & 5.35 & 3.01 & 1.91 & 1.50 \\           
    \bottomrule
    \end{tabular}
    \caption{Aggregated results for experiments with larger instances}
    \label{tab:results}
\end{table}

The results show that it is difficult to prove optimality with the given approach for VRPD-DSS instances with just 30 customers, and it becomes more difficult as the number of customers and drones increases. However, the solver is able to consistently provide good solutions in a reasonable amount of time. Similar to the optimality gap, the coefficient of variation of the objective function value increases with a growing number of customers and drones. This means that the spread of the objective function values at the end of Experiment~1 increases and the consistency decreases noticeably. Nevertheless, we consider an average coefficient of variation of 1\,\% as a small and acceptable spread. In addition to consistency, we assess Gurobi as providing good-quality solutions for the given instances. For instances with 30 and 40 customers, the average $\overline{\text{RPD}}$ is less than one percent within \SI{600}{\second}. After one hour of run time, the average $\overline{\text{RPD}}$ is between 0.13\,\% and 1.50\,\% for all instance sizes and only greater than 1\,\% for 50 customers and two drones per tandem. Note that the best known solution is almost always identical to the best solution found in Experiment~1 ($\text{RPD}^*$ in Table~\ref{tab:detailed_results_solver} is almost always 0). Thus, the best solution of an instance in Experiment~1 can very rarely be improved in eight hours in Experiment~2.

\subsubsection{Benefits of truck-drone tandems}
Finally, we analyze the benefits and cost-savings of truck-drone tandems for the VRPD-DSS compared to traditional truck-only delivery (TO). We use the best solution for each instance obtained in the experiments to determine potential savings. Detailed information on the results are provided in the appendix. Table~\ref{tab:results_truck_only} presents information on the instances and truck-only delivery, while Table~\ref{tab:detailed_costs_1D} and Table~\ref{tab:detailed_costs_2D} show detailed results for tandems with one and two drones, respectively.

Table~\ref{tab:agg_result_solutions} displays the average solution information for different numbers of customers and drones. It includes the operating time of trucks (Time); the distance traveled by trucks (Dist); the number of operations per drone ({\#}OP); the distance covered per drone (Dist); the number of charge cycles per drone ({\#}CC); the  different cost components, i.e., wages, fuel, and power; the total costs (Total); and finally, the relative change in the total costs compared to truck-only delivery ($\Delta$TO) as percentages.

\begin{table}[]
    \centering
    \scriptsize
    \begin{tabular}{llrrrrrrrrrr}
    \toprule
    $|C|$ & $|D|$ & \multicolumn{2}{c}{Truck} & \multicolumn{3}{c}{Drone} & \multicolumn{4}{c}{Costs\,[\$]} & $\Delta$TO\,[\%] \\
    \cmidrule(lr{0.5em}){3-4}\cmidrule(lr{0.5em}){5-7}\cmidrule(lr{0.5em}){8-11}
    & & Time\,[min] & Dist\,[km] & {\#}OP & Dist\,[km] & {\#}CC & Wages & Fuel & Power & Total\\
    \midrule
    30 & 1 & 270.73&160.76&5.90&37.15&2.92&90.25&25.72&0.26&116.23&-12.56\\
    & 2 & 250.42&153.26&4.90&32.36&2.52&83.48&24.52&0.45&108.44&-18.36 \\
    \midrule
    40 & 1 & 304.29&170.06&8.00&43.68&3.48&101.44&27.21&0.31&128.95&-14.68\\
    & 2 & 277.51&160.96&6.60&37.27&2.96&92.51&25.76&0.52&118.79&-21.41\\
    \midrule
    50 & 1 & 352.78&189.52&10.50&50.03&4.12&117.60&30.33&0.36&148.29&-14.75\\
    & 2 & 321.69&176.18&8.15&43.33&3.48&107.24&28.19&0.61&136.04&-21.89\\
    \bottomrule
    \end{tabular}
    \caption{Aggregated information on solutions for tandems with one and two drones.}
    \label{tab:agg_result_solutions}
\end{table}

The results show that significant cost-savings can be achieved by using truck-drone tandems and that savings increase with an additional drone. However, the benefit of the second drone is less than the benefit of the first drone. As expected, the total number of drone operations and the total distance traveled by all drones increase when two drones are used instead of one drone per tandem. Thus, more customers are served by drones, which results in lower costs. However, the workload per drone decreases, which may result in longer life spans of drones and batteries. Furthermore, savings increase with the number of customers, i.e., with higher customer density, since higher customer density leads to more feasible drone operations (see $|W|$ in Table~\ref{tab:results_truck_only}). Yet the increase from 40 to 50 customers is very small, so perhaps there is a saturation effect that limits the positive impact of customer density on savings, or the heuristic solutions for instances with 50 customers have poorer quality. 

Finally, Figure~\ref{fig:savings} presents a more detailed insight into the cost components and savings. First, we observe that wages account for the largest share of the costs. Fuel costs are less than a quarter of the total operational costs, while power costs are almost negligible. Note that, although power costs have little impact on the total costs, proper consideration of the energy consumption is critical for feasibility, and including drone operations also reduces wages and fuel costs. Moreover, the average reduction in wages is greater than the average reduction in fuel costs. For example, for instances with 30 customers, the use of tandems with a single drone can reduce wages by 13.7\,\%, while fuel costs can be reduced only by 9.5\,\%. Hence, the application of drones can reduce working hours more than the traveled distance of trucks. This highlights expedited delivery through parallelization of services as one of the key benefits of truck-drone tandems. Therefore, it is advisable to include some element of time in the evaluation of truck-drone tandems when comparing them to traditional truck-only delivery.

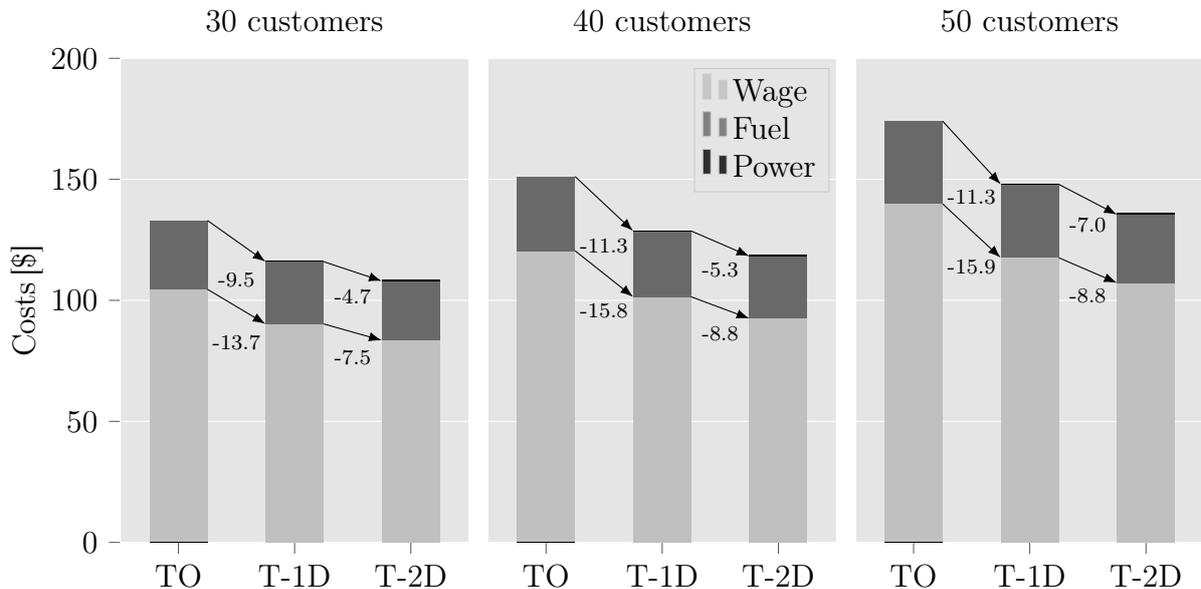
\begin{figure}
	\centering
\begin{tikzpicture}

\begin{groupplot}[width = 0.375\textwidth, height = 8cm, group style={group size=3 by 1, horizontal sep=0.25cm}]
\nextgroupplot[
axis background/.style={fill=white!89.8039215686275!black},
axis line style={white},
legend cell align={left},
legend style={
  fill opacity=0.8,
  draw opacity=1,
  text opacity=1,
  draw=white!80!black,
  fill=white!89.8039215686275!black
},
tick align=outside,
tick pos=left,
title={30 customers},
x grid style={white},
xmin=-0.5, xmax=2.5,
xtick style={color=white!33.3333333333333!black},
xtick={0,1,2},
xticklabels={TO,T-1D,T-2D},
y grid style={white},
ylabel={Costs\,[\$]},
ymajorgrids,
ymin=0, ymax=200,
ytick style={color=white!33.3333333333333!black}
]
\draw[draw=none,fill=white!75.2941176470588!black,very thin] (axis cs:-0.25,0) rectangle (axis cs:0.25,104.57);
\draw[draw=none,fill=white!75.2941176470588!black,very thin] (axis cs:0.75,0) rectangle (axis cs:1.25,90.25);
\draw[draw=none,fill=white!75.2941176470588!black,very thin] (axis cs:1.75,0) rectangle (axis cs:2.25,83.48);

\draw[draw=none,fill=white!41.1764705882353!black,very thin] (axis cs:-0.25,104.57) rectangle (axis cs:0.25,132.99);
\draw[draw=none,fill=white!41.1764705882353!black,very thin] (axis cs:0.75,90.25) rectangle (axis cs:1.25,115.97);
\draw[draw=none,fill=white!41.1764705882353!black,very thin] (axis cs:1.75,83.48) rectangle (axis cs:2.25,108);

\draw[draw=none,fill=black,very thin] (axis cs:-0.25,0) rectangle (axis cs:0.25,0);
\draw[draw=none,fill=black,very thin] (axis cs:0.75,115.97) rectangle (axis cs:1.25,116.23);
\draw[draw=none,fill=black,very thin] (axis cs:1.75,108) rectangle (axis cs:2.25,108.44);

\draw[-Latex] (axis cs:0.25,104.57) -- (axis cs:0.75,90.25) node[midway, below = 7pt]{\scriptsize -13.7}; 
\draw[-Latex] (axis cs:0.25,132.99) -- (axis cs:0.75,115.97) node[midway, below = 8pt]{\scriptsize -9.5}; 

\draw[-Latex] (axis cs:1.25,90.25) -- (axis cs:1.75,83.48) node[midway, below = 3pt]{\scriptsize -7.5}; 
\draw[-Latex] (axis cs:1.25,115.97) -- (axis cs:1.75,108) node[midway, below = 3pt]{\scriptsize -4.7}; 

\nextgroupplot[
axis background/.style={fill=white!89.8039215686275!black},
axis line style={white},
legend cell align={left},
legend style={
  fill opacity=0.8,
  draw opacity=1,
  text opacity=1,
  draw=white!80!black,
  fill=white!89.8039215686275!black
},
tick align=outside,
tick pos=left,
title={40 customers},
x grid style={white},
xmin=-0.5, xmax=2.5,
xtick style={color=white!33.3333333333333!black},
xtick={0,1,2},
xticklabels={TO,T-1D,T-2D},
y grid style={white},
ymajorgrids,
ymin=0, ymax=200,
ytick style={color=white!33.3333333333333!black},
ymajorticks=false
]
\draw[draw=none,fill=white!75.2941176470588!black,very thin] (axis cs:-0.25,0) rectangle (axis cs:0.25,120.48);
\draw[draw=none,fill=white!75.2941176470588!black,very thin] (axis cs:0.75,0) rectangle (axis cs:1.25,101.44);
\draw[draw=none,fill=white!75.2941176470588!black,very thin] (axis cs:1.75,0) rectangle (axis cs:2.25,92.51);
\addlegendimage{ybar,ybar legend,draw=none,fill=white!75.2941176470588!black,very thin}
\addlegendentry{Wage}

\draw[draw=none,fill=white!41.1764705882353!black,very thin] (axis cs:-0.25,120.48) rectangle (axis cs:0.25,151.16);
\draw[draw=none,fill=white!41.1764705882353!black,very thin] (axis cs:0.75,101.44) rectangle (axis cs:1.25,128.65);
\draw[draw=none,fill=white!41.1764705882353!black,very thin] (axis cs:1.75,92.51) rectangle (axis cs:2.25,118.27);
\addlegendimage{ybar,ybar legend,draw=none,fill=white!41.1764705882353!black,very thin}
\addlegendentry{Fuel}

\draw[draw=none,fill=black,very thin] (axis cs:-0.25,0) rectangle (axis cs:0.25,0);
\draw[draw=none,fill=black,very thin] (axis cs:0.75,128.65) rectangle (axis cs:1.25,128.95);
\draw[draw=none,fill=black,very thin] (axis cs:1.75,118.27) rectangle (axis cs:2.25,118.79);
\addlegendimage{ybar,ybar legend,draw=none,fill=black,very thin}
\addlegendentry{Power}

\draw[-Latex] (axis cs:0.25,120.48) -- (axis cs:0.75,101.44) node[midway, below = 7pt]{\scriptsize -15.8}; 
\draw[-Latex] (axis cs:0.25,151.16) -- (axis cs:0.75,128.65) node[midway, below = 9pt]{\scriptsize -11.3}; 

\draw[-Latex] (axis cs:1.25,101.44) -- (axis cs:1.75,92.51) node[midway, below = 3pt]{\scriptsize -8.8}; 
\draw[-Latex] (axis cs:1.25,128.65) -- (axis cs:1.75,118.27) node[midway, below = 3pt]{\scriptsize -5.3}; 

\nextgroupplot[
axis background/.style={fill=white!89.8039215686275!black},
axis line style={white},
legend cell align={left},
legend style={fill opacity=0.5, draw opacity=1, text opacity=1, draw=white!80!black},
tick align=outside,
tick pos=left,
title={50 customers},
x grid style={white},
xmin=-0.5, xmax=2.5,
xtick style={color=white!33.3333333333333!black},
xtick={0,1,2},
xticklabels={TO,T-1D,T-2D},
y grid style={white},
ymajorgrids,
ymin=0, ymax=200,
ytick style={color=white!33.3333333333333!black},
ymajorticks=false
]
\draw[draw=none,fill=white!75.2941176470588!black,very thin] (axis cs:-0.25,0) rectangle (axis cs:0.25,139.85);
\draw[draw=none,fill=white!75.2941176470588!black,very thin] (axis cs:0.75,0) rectangle (axis cs:1.25,117.6);
\draw[draw=none,fill=white!75.2941176470588!black,very thin] (axis cs:1.75,0) rectangle (axis cs:2.25,107.24);

\draw[draw=none,fill=white!41.1764705882353!black,very thin] (axis cs:-0.25,139.85) rectangle (axis cs:0.25,174.03);
\draw[draw=none,fill=white!41.1764705882353!black,very thin] (axis cs:0.75,117.6) rectangle (axis cs:1.25,147.92);
\draw[draw=none,fill=white!41.1764705882353!black,very thin] (axis cs:1.75,107.24) rectangle (axis cs:2.25,135.43);

\draw[draw=none,fill=black,very thin] (axis cs:-0.25,0) rectangle (axis cs:0.25,0);
\draw[draw=none,fill=black,very thin] (axis cs:0.75,147.92) rectangle (axis cs:1.25,148.28);
\draw[draw=none,fill=black,very thin] (axis cs:1.75,135.43) rectangle (axis cs:2.25,136.04);

\draw[-Latex] (axis cs:0.25,139.85) -- (axis cs:0.75,117.60) node[midway, below = 7pt]{\scriptsize -15.9}; 
\draw[-Latex] (axis cs:0.25,174.03) -- (axis cs:0.75,147.92) node[midway, below = 10pt]{\scriptsize -11.3}; 

\draw[-Latex] (axis cs:1.25,117.60) -- (axis cs:1.75,107.24) node[midway, below = 3pt]{\scriptsize -8.8}; 
\draw[-Latex] (axis cs:1.25,147.92) -- (axis cs:1.75,135.43) node[midway, below = 3pt]{\scriptsize -7.0}; 

\end{groupplot}

\end{tikzpicture}
	\caption{Cost structure of different delivery systems and average savings of a tandem with one drone (T-1D) and two drones (T-2D) compared to truck-only delivery (TO)}
	\label{fig:savings}
\end{figure}

\section{Conclusion and future research} \label{sec:Conclusion}
In this paper, combined parcel delivery by trucks and drones is studied, where the speed of a drone flight can be selected from a discrete set of different speeds. We call this problem the vehicle routing problem with drones and drone speed selection, and the following trade-off in speed selection is considered: On one hand, a faster speed shortens delivery times; on the other, it leads to increased energy consumption and, thereby, to a shorter range. We introduce an MILP for this problem, as well as preprocessing methods to eliminate dominated drone speeds and unnecessary variables, and valid inequalities to further strengthen the formulation. We test our approach on instances that closely resemble a real-world scenario in a rural area. The results clearly demonstrate the effectiveness of the preprocessing methods. We also show that, if only a single speed is available for each drone flight, increasing the speed above a threshold does not usually lead to lower costs. However, this threshold differs between instances. Therefore, from a cost perspective, it is always beneficial to consider multiple speeds. The results further indicate that a general solver such as Gurobi can consistently provide high-quality solutions for larger instances of the VRPD-DSS with up to 50 customers. Finally, our results show that truck-drone tandems can achieve significant savings compared to truck-only delivery for the rural scenario considered here. In addition, truck-driver wages account for the largest share of costs but can also be reduced the most by using tandems. In contrast, electricity costs for the drones are almost negligible. However, considering the energy consumption of drones with different speeds is crucial for the feasibility of solutions.

There are many potential avenues for future research. For example, heuristic algorithms can be developed for the VRPD-DSS, and the solutions obtained with the exact approach presented here can be used to evaluate the algorithms. In addition, VRPD-DSS heuristics can be compared to heuristics for the mFSTSP-VDS to investigate the effects of discrete speed levels instead of continuous drone-speed decision variables. Other exact approaches could also be developed to consider continuous drone speeds. A further interesting area of research could be the incorporation of external circumstances such as weather in order to derive more-robust routing decisions.
\bibliography{Literature}
\bibliographystyle{abbrvnat}

\clearpage
\appendix
\setcounter{table}{0}
\section{Tables}

\begin{table}[ht]
    \centering
    \begin{tabular}{lll}
        $n$ & number rotors & 8 \\
        $D$ & diameter of rotor & \SI{0.432}{\m} \\
        $m_{\text{db}}$ & mass drone frame & \SI{10}{\kg} \\
        $m_{\text{b}}$ & mass battery & \SI{6}{\kg} \\
        $c_{\text{db}}$ & drag coefficient drone body & 1.49 \\
        $c_{\text{b}}$ & drag coefficient battery & 1.00 \\
        $c_{\text{p}}$ & drag coefficient package & 2.20 \\
        $A_{\text{db}}$ & projected area drone body & \SI{0.224}{\m\squared} \\
        $A_{\text{b}}$ & projected area battery & \SI{0.015}{\m\squared} \\
        $A_{\text{p}}$ & projected area package & \SI{0.0929}{\m\squared} \\
        $g$ & standard acceleration due to gravity & \SI{9.81}{\m\per\s\squared} \\
        $\rho$ & density of air & \SI{1.2250}{\kg\per\cubic\m}
    \end{tabular}
    \caption{Parameters of the octocopter energy model as in \cite{stolaroffj-2018}}
    \label{tab:parameters_drone}
\end{table}
\begin{landscape}
\begin{table}[ht]
\begin{threeparttable}
    \centering
    \scriptsize
    \begin{tabular}{llrrrrrrrrrrrrrr}
\toprule
    $|C|$ & Instance & \multicolumn{7}{c}{$|D| = 1$} & \multicolumn{7}{c}{$|D| = 2$} \\
    \cmidrule(lr{0.5em}){3-9} \cmidrule(lr{0.5em}){10-16}
& & $\overline{\text{Obj}}$ & $\text{Obj}^*$ & CV & BKS & Gap &$\overline{\text{RPD}}$ & $\text{RPD}^*$ & $\overline{\text{Obj}}$ & $\text{Obj}^*$ & CV & BKS & Gap &$\overline{\text{RPD}}$ & $\text{RPD}^*$\\
\midrule
30&\texttt{SF\_30\_1}&108.51&108.51&0.00&108.51&2.16&0.00&0.00&104.82&104.53&0.35&104.53&5.15&0.28&0.00\\
&\texttt{SF\_30\_2}&122.89&122.33&0.23&122.33&0.00&0.46&0.00&115.91&115.86&0.09&115.01&0.00&0.78&0.74\\
&\texttt{SF\_30\_3}&119.95&119.78&0.17&119.78&4.43&0.14&0.00&112.05&110.90&0.60&110.90&6.72&1.04&0.00\\
&\texttt{SF\_30\_4}&122.51&122.42&0.06&121.95&4.96&0.46&0.39&110.22&110.22&0.00&110.22&5.95&0.00&0.00\\
&\texttt{SF\_30\_5}&117.20&117.20&0.00&117.20&0.00&0.00&0.00&109.42&109.09&0.61&109.09&1.39&0.30&0.00\\
&\texttt{SF\_30\_6}&119.68&119.67&0.00&119.67&1.52&0.01&0.00&111.30&111.15&0.26&111.15&2.48&0.13&0.00\\
&\texttt{SF\_30\_7}&113.36&113.36&0.00&113.36&4.70&0.00&0.00&105.62&105.56&0.11&105.56&6.33&0.06&0.00\\
&\texttt{SF\_30\_8}&104.96&104.96&0.00&104.96&2.71&0.00&0.00&98.58&98.08&0.41&98.08&3.45&0.51&0.00\\
&\texttt{SF\_30\_9}&104.39&104.39&0.00&104.39&2.12&0.00&0.00&98.97&98.96&0.01&98.96&3.86&0.01&0.00\\
&\texttt{SF\_30\_10}&130.42&130.14&0.11&130.14&3.79&0.22&0.00&121.18&120.94&0.24&120.94&5.60&0.20&0.00\\
\midrule
Avg&&116.39&116.28&0.06&116.23&2.64&0.13&0.04&108.81&108.53&0.27&108.44&4.09&0.33&0.07\\
\midrule
40&\texttt{SF\_40\_1}&127.71&127.71&0.00&127.71&7.74&0.00&0.00&119.85&119.15&0.42&119.15&12.32&0.59&0.00\\
&\texttt{SF\_40\_2}&126.90&126.63&0.28&126.63&6.08&0.21&0.00&117.70&117.63&0.09&117.63&8.43&0.06&0.00\\
&\texttt{SF\_40\_3}&132.23&132.23&0.00&132.23&6.74&0.00&0.00&122.76&122.76&0.00&122.76&7.85&0.00&0.00\\
&\texttt{SF\_40\_4}&119.02&118.29&0.38&118.29&7.97&0.62&0.00&109.38&108.83&0.43&108.83&18.51&0.51&0.00\\
&\texttt{SF\_40\_5}&127.12&127.12&0.00&127.12&5.72&0.00&0.00&116.87&116.48&0.28&116.48&11.28&0.33&0.00\\
&\texttt{SF\_40\_6}&131.04&130.44&0.56&130.44&8.33&0.46&0.00&121.57&120.04&0.89&120.04&12.82&1.27&0.00\\
&\texttt{SF\_40\_7}&134.06&133.90&0.24&133.90&8.68&0.12&0.00&125.34&124.70&0.33&124.70&12.21&0.51&0.00\\
&\texttt{SF\_40\_8}&128.36&127.99&0.26&127.99&6.21&0.29&0.00&118.94&118.42&0.38&118.42&13.72&0.44&0.00\\
&\texttt{SF\_40\_9}&134.37&134.20&0.26&134.20&8.92&0.13&0.00&121.97&121.41&0.68&121.41&14.23&0.46&0.00\\
&\texttt{SF\_40\_10}&131.52&131.02&0.55&131.02&7.89&0.38&0.00&118.57&118.44&0.23&118.44&9.22&0.11&0.00\\
\midrule
Avg&&129.23&128.95&0.25&128.95&7.43&0.22&0.00&119.30&118.79&0.37&118.79&12.06&0.43&0.00\\
\midrule
50&\texttt{SF\_50\_1}&151.61&150.63&0.34&150.63&12.91&0.65&0.00&135.10&133.91&1.05&133.91&14.91&0.89&0.00\\
&\texttt{SF\_50\_2}&154.33&153.62&0.38&153.62&9.06&0.46&0.00&143.82&143.11&0.53&143.11&19.36&0.50&0.00\\
&\texttt{SF\_50\_3}&147.26&145.66&1.24&145.66&8.01&1.10&0.00&141.81&134.83&2.74&134.83&15.53&5.18&0.00\\
&\texttt{SF\_50\_4}&150.35&150.18&0.23&150.18&8.76&0.11&0.00&137.67&137.13&0.43&137.13&11.88&0.39&0.00\\
&\texttt{SF\_50\_5}&147.10&146.02&0.97&146.02&8.90&0.74&0.00&136.86&134.99&1.31&134.99&17.54&1.39&0.00\\
&\texttt{SF\_50\_6}&147.32&146.94&0.15&146.94&8.35&0.26&0.00&135.97&134.42&0.77&133.71&13.28&1.69&0.53\\
&\texttt{SF\_50\_7}&162.37&161.46&0.66&161.46&8.50&0.56&0.00&151.34&149.99&0.49&149.99&11.70&0.90&0.00\\
&\texttt{SF\_50\_8}&143.42&142.36&0.81&142.36&7.91&0.74&0.00&131.78&130.77&0.49&130.77&15.03&0.77&0.00\\
&\texttt{SF\_50\_9}&139.75&139.75&0.00&139.75&9.55&0.00&0.00&130.28&127.69&1.37&127.69&16.58&2.03&0.00\\
&\texttt{SF\_50\_10}&146.61&146.28&0.26&146.28&11.02&0.23&0.00&135.91&134.24&0.82&134.24&17.84&1.24&0.00\\
\midrule
Avg&&149.01&148.29&0.50&148.29&9.30&0.49&0.00&138.05&136.11&1.00&136.04&15.37&1.50&0.05\\
\bottomrule
    \end{tabular}
    \begin{tablenotes}
    \item $\overline{\text{Obj}}$ - Average objective function value at termination (Experiment 1)
    \item $\text{Obj}^*$ - Best objective function value at termination (Experiment 1)
    \item CV - Coefficient of variation
    \item BKS - Best known solution, corresponds to the objective function value at termination (Experiment 2)
    \item Gap - Optimality gap of BKS at termination in percent (Experiment 2)
    \item $\overline{\text{RPD}}$ - Relative percentage deviation of $\overline{\text{Obj}}$ with respect to BKS
    \item $\text{RPD}^*$ - Relative percentage deviation of $\text{Obj}^*$ with respect to BKS
    \end{tablenotes}
    \end{threeparttable}
    \caption{Detailed results for MILP solver as heuristic for experiments with larger instances}
    \label{tab:detailed_results_solver}

\end{table}
\end{landscape}

\begin{table}[ht]
	\setlength{\tabcolsep}{3pt}
	\centering
	\scriptsize
	\begin{tabular}{llrrrrrrr}
		\toprule
        $|C|$ & Instance & $|\bar{C}|$ & $|W|$ & \multicolumn{2}{c}{Truck} & \multicolumn{3}{c}{Costs\,[\$]}\\
        \cmidrule(lr{0.5em}){5-6} \cmidrule(lr{0.5em}){7-9}
        &&&& Time\,[min] & Dist\,[km] & Wages & Fuel & Total \\
        \midrule
        30&\texttt{SF\_30\_1}&22&5789&290.03&153.15&96.69&24.50&121.19\\
        &\texttt{SF\_30\_2}&18&1393&330.72&187.26&110.25&29.96&140.21\\
        &\texttt{SF\_30\_3}&19&1466&315.10&178.29&105.04&28.53&133.57\\
        &\texttt{SF\_30\_4}&23&1236&337.82&184.52&112.61&29.52&142.14\\
        &\texttt{SF\_30\_5}&21&1938&327.00&190.07&109.01&30.41&139.42\\
        &\texttt{SF\_30\_6}&18&1588&324.42&183.67&108.15&29.39&137.54\\
        &\texttt{SF\_30\_7}&20&2714&306.67&176.89&102.23&28.30&130.53\\
        &\texttt{SF\_30\_8}&18&3901&281.42&163.23&93.81&26.12&119.93\\
        &\texttt{SF\_30\_9}&20&4023&280.03&158.53&93.35&25.37&118.72\\
        &\texttt{SF\_30\_10}&19&1594&343.77&200.57&114.60&32.09&146.69\\
        \midrule
        Avg&&19.8&2564.2&313.70&177.62&104.57&28.42&132.99\\
         \midrule
        40&\texttt{SF\_40\_1}&26&7012&351.95&193.77&117.33&31.00&148.33\\
        &\texttt{SF\_40\_2}&29&6291&351.08&184.20&117.04&29.47&146.51\\
        &\texttt{SF\_40\_3}&25&6055&359.02&199.14&119.68&31.86&151.54\\
        &\texttt{SF\_40\_4}&28&8452&337.88&171.67&112.64&27.47&140.10\\
        &\texttt{SF\_40\_5}&26&5026&363.95&197.57&121.33&31.61&152.94\\
        &\texttt{SF\_40\_6}&25&4796&375.78&202.95&125.27&32.47&157.74\\
        &\texttt{SF\_40\_7}&22&5642&370.98&190.17&123.67&30.43&154.10\\
        &\texttt{SF\_40\_8}&29&7113&367.85&187.38&122.63&29.98&152.61\\
        &\texttt{SF\_40\_9}&37&7139&377.13&203.96&125.72&32.63&158.35\\
        &\texttt{SF\_40\_10}&27&5144&358.42&186.91&119.48&29.91&149.39\\
         \midrule
        Avg&&27.4&6267.0&361.40&191.77&120.48&30.68&151.16\\
         \midrule
        50&\texttt{SF\_50\_1}&29&9792&407.40&211.05&135.81&33.77&169.58\\
        &\texttt{SF\_50\_2}&41&10169&455.78&230.59&151.94&36.89&188.83\\
        &\texttt{SF\_50\_3}&33&11278&412.35&213.13&137.46&34.10&171.56\\
        &\texttt{SF\_50\_4}&33&10573&416.78&213.74&138.94&34.20&173.14\\
        &\texttt{SF\_50\_5}&34&16652&414.42&213.56&138.15&34.17&172.32\\
        &\texttt{SF\_50\_6}&36&18007&401.65&205.79&133.89&32.93&166.82\\
        &\texttt{SF\_50\_7}&31&7793&453.87&229.16&151.30&36.67&187.97\\
        &\texttt{SF\_50\_8}&29&9707&406.98&212.18&135.67&33.95&169.62\\
        &\texttt{SF\_50\_9}&31&19755&406.72&199.94&135.58&31.99&167.57\\
        &\texttt{SF\_50\_10}&36&15933&419.12&207.15&139.72&33.14&172.86\\
         \midrule
        Avg&&33.3&12965.9&419.51&213.63&139.85&34.18&174.03\\   	
		\bottomrule
	\end{tabular}
	\caption{Information on instances and detailed results for truck usage and costs for truck-only delivery.}
	\label{tab:results_truck_only}
\end{table}
\begin{landscape}
\begin{table}[]
    \centering
    \scriptsize
    \begin{tabular}{llrrrrrrrrrrrr}
    \toprule
    $|C|$ & Instance & \multicolumn{2}{c}{Truck} & \multicolumn{3}{c}{Drone} & \multicolumn{4}{c}{Costs\,[\$]} & \multicolumn{3}{c}{$\Delta$TO\,[\%]} \\
    \cmidrule(lr{0.5em}){3-4}\cmidrule(lr{0.5em}){5-7}\cmidrule(lr{0.5em}){8-11}\cmidrule(lr{0.5em}){12-14}
    & & Time\,[min] & Dist\,[km] & {\#}OP & Dist\,[km] & {\#}CC & Wages & Fuel & Power & Total & Wages & Fuel & Total\\
    \midrule
    30&\texttt{SF\_30\_1}&254.97&145.46&6.00&32.72&2.70&85.00&23.27&0.24&108.51&-12.09&-5.02&-10.46\\
    &\texttt{SF\_30\_2}&285.51&168.06&6.00&37.31&2.97&95.18&26.89&0.26&122.33&-13.67&-10.25&-12.75\\
    &\texttt{SF\_30\_3}&278.23&167.54&5.00&32.58&2.58&92.75&26.81&0.23&119.78&-11.70&-6.03&-10.32\\
    &\texttt{SF\_30\_4}&288.59&159.08&5.00&43.41&3.37&96.20&25.45&0.30&121.95&-14.57&-13.79&-14.20\\
    &\texttt{SF\_30\_5}&274.23&159.48&6.00&36.31&3.04&91.42&25.52&0.27&117.20&-16.14&-16.08&-15.94\\
    &\texttt{SF\_30\_6}&279.59&163.71&6.00&39.93&3.12&93.20&26.19&0.27&119.67&-13.82&-10.89&-12.99\\
    &\texttt{SF\_30\_7}&261.63&161.68&7.00&41.20&3.15&87.22&25.87&0.28&113.36&-14.68&-8.59&-13.15\\
    &\texttt{SF\_30\_8}&243.97&146.39&6.00&29.32&2.32&81.33&23.42&0.20&104.96&-13.30&-10.34&-12.48\\
    &\texttt{SF\_30\_9}&240.87&149.09&6.00&34.41&2.71&80.30&23.86&0.24&104.39&-13.98&-5.95&-12.07\\
    &\texttt{SF\_30\_10}&299.70&187.15&6.00&44.30&3.28&99.91&29.94&0.29&130.14&-12.82&-6.70&-11.28\\
    \midrule
    Avg&&270.73&160.76&5.90&37.15&2.92&90.25&25.72&0.26&116.23&-13.68&-9.36&-12.56\\
    \midrule
    40&\texttt{SF\_40\_1}&301.64&167.78&9.00&42.17&3.54&100.56&26.84&0.31&127.71&-14.29&-13.42&-13.90\\
    &\texttt{SF\_40\_2}&300.20&164.11&8.00&42.09&3.46&100.07&26.26&0.30&126.63&-14.50&-10.89&-13.57\\
    &\texttt{SF\_40\_3}&310.97&176.87&7.00&39.47&2.97&103.67&28.30&0.26&132.23&-13.38&-11.17&-12.74\\
    &\texttt{SF\_40\_4}&280.11&153.82&8.00&39.60&3.40&93.38&24.61&0.30&118.29&-17.10&-10.41&-15.57\\
    &\texttt{SF\_40\_5}&298.85&170.02&8.00&41.18&3.33&99.62&27.20&0.29&127.12&-17.89&-13.95&-16.88\\
    &\texttt{SF\_40\_6}&307.60&172.33&7.00&46.97&3.67&102.54&27.57&0.32&130.44&-18.14&-15.09&-17.31\\
    &\texttt{SF\_40\_7}&316.63&175.07&7.00&49.53&3.83&105.55&28.01&0.34&133.90&-14.65&-7.95&-13.11\\
    &\texttt{SF\_40\_8}&301.53&169.67&9.00&46.79&3.64&100.52&27.15&0.32&127.99&-18.03&-9.44&-16.13\\
    &\texttt{SF\_40\_9}&315.48&179.51&8.00&44.42&3.49&105.17&28.72&0.31&134.20&-16.35&-11.98&-15.25\\
    &\texttt{SF\_40\_10}&309.85&171.45&9.00&44.56&3.43&103.29&27.43&0.30&131.02&-13.55&-8.29&-12.30\\
    \midrule
    Avg&&304.29&170.06&8.00&43.68&3.48&101.44&27.21&0.31&128.95&-15.79&-11.26&-14.68\\
    \midrule
    50&\texttt{SF\_50\_1}&356.03&197.39&10.00&49.44&4.15&118.68&31.58&0.36&150.63&-12.61&-6.49&-11.17\\
    &\texttt{SF\_50\_2}&367.72&191.66&9.00&53.40&4.18&122.58&30.67&0.37&153.62&-19.32&-16.86&-18.65\\
    &\texttt{SF\_50\_3}&348.02&182.93&11.00&52.00&4.24&116.01&29.27&0.37&145.66&-15.60&-14.16&-15.10\\
    &\texttt{SF\_50\_4}&356.84&192.77&11.00&53.63&4.31&118.95&30.84&0.38&150.18&-14.39&-9.82&-13.26\\
    &\texttt{SF\_50\_5}&348.54&184.23&10.00&49.66&3.99&116.19&29.48&0.35&146.02&-15.90&-13.73&-15.26\\
    &\texttt{SF\_50\_6}&349.67&187.73&10.00&46.47&3.86&116.56&30.04&0.34&146.94&-12.94&-8.78&-11.92\\
    &\texttt{SF\_50\_7}&383.20&208.28&11.00&53.96&4.41&127.74&33.33&0.39&161.46&-15.57&-9.11&-14.10\\
    &\texttt{SF\_50\_8}&336.55&186.18&10.00&49.22&4.30&112.19&29.79&0.38&142.36&-17.31&-12.25&-16.07\\
    &\texttt{SF\_50\_9}&333.30&176.90&11.00&45.44&3.82&111.11&28.30&0.34&139.75&-18.05&-11.53&-16.60\\
    &\texttt{SF\_50\_10}&347.95&187.16&12.00&47.09&3.91&115.99&29.95&0.34&146.28&-16.98&-9.63&-15.38\\
    \midrule
    Avg&&352.78&189.52&10.50&50.03&4.12&117.60&30.33&0.36&148.29&-15.87&-11.24&-14.75\\
    \bottomrule
    \end{tabular}
    \caption{Detailed information on truck and drone usage and costs for tandems with one drone.}
    \label{tab:detailed_costs_1D}
\end{table}

\begin{table}[]
    \centering
    \scriptsize
    \begin{tabular}{llrrrrrrrrrrrr}
    \toprule
    $|C|$ & Instance & \multicolumn{2}{c}{Truck} & \multicolumn{3}{c}{Drone} & \multicolumn{4}{c}{Costs\,[\$]} & \multicolumn{3}{c}{$\Delta$TO\,[\%]} \\
    \cmidrule(lr{0.5em}){3-4}\cmidrule(lr{0.5em}){5-7}\cmidrule(lr{0.5em}){8-11}\cmidrule(lr{0.5em}){12-14}
    & & Time\,[min] & Dist\,[km] & {\#}OP & Dist\,[km] & {\#}CC & Wages & Fuel & Power & Total & Wages & Fuel & Total\\
    \midrule
    30&\texttt{SF\_30\_1}&245.28&139.65&5.00&29.18&2.36&81.77&22.34&0.42&104.53&-15.43&-8.82&-13.75\\
    &\texttt{SF\_30\_2}&266.55&160.80&4.50&31.24&2.42&88.86&25.73&0.43&115.01&-19.40&-14.12&-17.97\\
    &\texttt{SF\_30\_3}&259.20&150.09&4.50&35.85&2.73&86.41&24.01&0.48&110.90&-17.74&-15.84&-16.97\\
    &\texttt{SF\_30\_4}&256.93&150.56&4.50&32.89&2.73&85.65&24.09&0.48&110.22&-23.94&-18.39&-22.46\\
    &\texttt{SF\_30\_5}&251.92&154.17&4.50&32.37&2.52&83.98&24.67&0.44&109.09&-22.96&-18.88&-21.75\\
    &\texttt{SF\_30\_6}&254.83&160.57&5.50&37.92&2.89&84.95&25.69&0.51&111.15&-21.45&-12.59&-19.19\\
    &\texttt{SF\_30\_7}&240.90&155.17&5.00&32.22&2.44&80.31&24.83&0.43&105.56&-21.44&-12.26&-19.13\\
    &\texttt{SF\_30\_8}&224.10&143.60&5.50&28.71&2.25&74.71&22.98&0.40&98.08&-20.36&-12.02&-18.22\\
    &\texttt{SF\_30\_9}&226.10&144.94&5.50&28.53&2.25&75.37&23.19&0.40&98.96&-19.26&-8.59&-16.64\\
    &\texttt{SF\_30\_10}&278.35&173.06&4.50&34.70&2.62&92.79&27.69&0.46&120.94&-19.03&-13.71&-17.55\\
    \midrule
    Avg&&250.42&153.26&4.90&32.36&2.52&83.48&24.52&0.45&108.44&-20.10&-13.52&-18.36\\
    \midrule
    40&\texttt{SF\_40\_1}&279.58&158.84&6.50&38.19&3.04&93.20&25.42&0.53&119.15&-20.57&-18.00&-19.67\\
    &\texttt{SF\_40\_2}&273.50&161.93&8.00&37.25&3.11&91.18&25.91&0.55&117.63&-22.10&-12.08&-19.71\\
    &\texttt{SF\_40\_3}&285.22&170.34&6.00&29.94&2.40&95.08&27.25&0.42&122.76&-20.55&-14.47&-18.99\\
    &\texttt{SF\_40\_4}&257.80&140.10&6.00&34.85&2.70&85.94&22.42&0.47&108.83&-23.70&-18.38&-22.32\\
    &\texttt{SF\_40\_5}&269.82&162.64&6.50&36.70&2.91&89.95&26.02&0.51&116.48&-25.86&-17.68&-23.84\\
    &\texttt{SF\_40\_6}&279.72&164.11&6.00&39.21&3.06&93.25&26.26&0.54&120.04&-25.56&-19.13&-23.90\\
    &\texttt{SF\_40\_7}&292.78&165.86&6.50&39.65&3.17&97.60&26.54&0.56&124.70&-21.08&-12.78&-19.08\\
    &\texttt{SF\_40\_8}&276.57&160.56&7.00&39.34&3.05&92.20&25.69&0.54&118.42&-24.92&-14.51&-22.53\\
    &\texttt{SF\_40\_9}&282.51&166.69&7.00&39.64&3.20&94.18&26.67&0.56&121.41&-25.09&-18.27&-23.33\\
    &\texttt{SF\_40\_10}&277.64&158.55&6.50&37.88&2.93&92.56&25.37&0.52&118.44&-22.53&-15.18&-20.72\\
    \midrule
    Avg&&277.51&160.96&6.60&37.27&2.96&92.51&25.76&0.52&118.79&-23.20&-16.05&-21.41\\
    \midrule
    50&\texttt{SF\_50\_1}&317.33&172.32&6.50&41.43&3.16&105.78&27.57&0.55&133.91&-22.11&-18.36&-21.03\\
    &\texttt{SF\_50\_2}&338.44&185.44&8.00&45.45&3.53&112.82&29.67&0.62&143.11&-24.98&-20.20&-23.70\\
    &\texttt{SF\_50\_3}&320.17&171.51&9.00&46.13&3.71&106.73&27.44&0.65&134.83&-23.74&-19.93&-22.61\\
    &\texttt{SF\_50\_3}&324.07&177.92&7.50&44.85&3.60&108.03&28.47&0.63&137.13&-22.25&-16.75&-20.80\\
    &\texttt{SF\_50\_5}&319.27&174.94&8.00&39.97&3.23&106.43&27.99&0.57&134.99&-22.96&-18.09&-21.66\\
    &\texttt{SF\_50\_6}&313.15&179.69&8.50&40.57&3.26&104.39&28.75&0.57&133.71&-22.03&-12.69&-19.85\\
    &\texttt{SF\_50\_7}&352.63&198.73&8.50&45.71&3.66&117.55&31.80&0.64&149.99&-22.31&-13.28&-20.21\\
    &\texttt{SF\_50\_8}&307.89&171.80&8.50&44.53&3.66&102.64&27.49&0.64&130.77&-24.35&-19.03&-22.90\\
    &\texttt{SF\_50\_9}&304.77&159.27&8.00&42.20&3.47&101.60&25.48&0.61&127.69&-25.06&-20.35&-23.80\\
    &\texttt{SF\_50\_10}&319.18&170.19&9.00&42.41&3.49&106.40&27.23&0.61&134.24&-23.85&-17.83&-22.34\\
    \midrule
    Avg&&321.69&176.18&8.15&43.33&3.48&107.24&28.19&0.61&136.04&-23.36&-17.65&-21.89\\
    \bottomrule
    \end{tabular}
    \caption{Detailed information on truck and drone usage and costs for tandems with two drones.}
    \label{tab:detailed_costs_2D}
\end{table}
\end{landscape}
\end{document}